\newcommand{\ldef}{:=}
\newcommand{\rdef}{=:}
\newcommand{\Mcal}[1]{\mathcal{#1}}
\newcommand{\tth}{^{\text{th}}}
\newcommand{\bld}[1]{\mathbf{#1}}
\newcommand{\thmtitle}[1]{\mbox{}\textit{(#1).}}
\newcommand{\bulletsym}{\hbox{$\bullet$}}
\newcommand{\bulletend}{\relax\ifmmode\else\unskip\hfill\fi\bulletsym}
\newcommand{\squaresym}{\hbox{$\blacksquare$}}
\newcommand{\proofend}{\relax\ifmmode\else\unskip\hfill\fi\squaresym}
\newcommand{\trianglesym}{\hbox{$\blacktriangle$}}
\newcommand{\egend}{\relax\ifmmode\else\unskip\hfill\fi\trianglesym}
\renewenvironment{proof}{\textit{Proof.} }{\proofend}
\newtheorem{problem}{Problem}[section]
\newtheorem{Assumption}{Assumption}[section]
\newcommand{\until}[1]{\{1,\dots, #1\}}
\newcommand{\integernonnegative}{\ensuremath{\mathbb{Z}}_{\ge 0}}
\def \A{\bld{A}}
\def \Ahat{\widehat{\A}}
\def \abs{\mathrm{abs}}
\def \B{\bld{B}}
\def \Bhat{\widehat{\B}}
\def \E{\bld{E}}
\def \exp{\mathrm{exp}}
\def \F{\bld{F}}
\def \integer{\mathbb{Z}}
\def \M{\bld{M}}
\def \one{\bld{1}}
\def \real{\mathbb{R}}
\def \svec{\bld{s}}
\def \S{\bld{S}}
\def \smp{\mathcal{S}}
\def \smpind{s}
\def \u{\bld{u}}
\def \uhat{\widehat{\u}}
\def \v{\bld{v}}
\def \w{\bld{w}}
\def \wb{\overline{\bld{w}}}
\def \x{\bld{x}}
\def \xhat{\widehat{\x}}
\def \y{\bld{y}}
\def \z{\bld{z}}
\def \zero{\bld{0}}
\newcommand{\prob}[1]{\mathbb{P}\left(#1\right)}
\def \scriptA{\mathcal{A}}
\def \scriptB{\mathcal{B}}
\def \scriptxo{{\mathcal{X}_0}}
\def \scriptxf{{\mathcal{X}_f}}
\newcommand{\makesmall}[1]{\text{\small ${#1}$}}
\newcommand\blfootnote[1]{%
  \begingroup
  \renewcommand\thefootnote{}\footnote{#1}%
  \addtocounter{footnote}{-1}%
  \endgroup
}
\newcommand{\mo}[1]{#1}
\newcommand{\nm}[1]{#1}
\begin{document}

\newcommand\relatedversion{}

\title{\Large Control of Discrete-Time LTI Systems using Stochastic Ensemble Systems\relatedversion}
\author{Nirabhra Mandal
\and Mohammad Khajenejad
\and Sonia Mart{\'\i}nez
\blfootnote{Nirabhra Mandal, Mohammad Khajenejad
and Sonia Mart{\'\i}nez are with the Mechanical and Aerospace Engineering Department, University of California San Diego. \texttt{\{nmandal,mkhajenejad,soniamd\}@ ucsd.edu}}
}

\date{}

\maketitle


\fancyfoot[R]{\scriptsize{Copyright \textcopyright\ 2023 by SIAM\\
Unauthorized reproduction of this article is prohibited}}





\begin{abstract} \small\baselineskip=9pt \textbf{
    In this paper, we study the control properties of a new class of
    stochastic ensemble systems that consists of families of random variables. These random variables provide an increasingly
    good approximation of an unknown discrete, linear-time invariant
    (DLTI) system, and can be obtained by a standard, data-driven
    procedure. Our first result relates the reachability properties
    of the stochastic ensemble system with that of the limiting DLTI
    system. We then provide a method to combine the control inputs
    obtained from the stochastic ensemble systems to compute a control
    input for the DLTI system. Later, we deal with a particular kind of
    stochastic ensemble systems generated from realizing Bernoulli
    random variables. For this, we characterize the variance of the computed
    state and control. We also
    do the same for a situation where the data is updated sequentially in a streaming fashion. We illustrate the results
    numerically in various simulation examples.}

\textbf{Keywords \textemdash 
  Approximate reachability, sample
  reachability, stochastic ensemble system.}
\end{abstract}

\section{Introduction}
\mo{

Ensemble control, \emph{i.e.}, investigating the ability of steering and manipulating an entire ensemble of (partially) unknown systems in a desired and optimal manner, has  
emerged from several science and engineering applications in recent years, \emph{e.g}, coordination of the movement of flocks in
biology \cite{brockett2010control}, manipulation of spin ensembles in nuclear magnetic
resonance \cite{li2006control}, \cite{glaser1998unitary}, or desynchronization of pathological neurons
in the brain in neuroscience~\cite{zhai2008control}. 

\paragraph{Literature Review} Motivated by this, a huge body of seminal work has been done
on the analysis of
(deterministic) ensemble controllability \cite{JSL-NK:07}, \cite{beauchard2010controllability}, \cite{AB-TB:12} and synthesis of optimal
ensemble controls \cite{JQ-AZ-JSL:13}, \cite{zlotnik2012synthesis}, \cite{wang2015fixed} by developing new analytical and numerical methods,
%
which have received
increasing attention due to their application to
robotics~\cite{AB-TB:12}, energy systems~\cite{MC-VYC-DD:18,
  AH-RM-DD-YD:20} and quantum control~\cite{JSL-NK:09,
  RB-NK:00}. However, the traditional ensemble control problem consists of driving a
collection of initial states of a continuum of systems to a set of final states with the same
control input~\cite{JSL-JQ:15}, which could be potentially conservative.  
The problem of reachability for finite-dimensional linear time-varying
systems has been studied in \cite{JSL:10, LT-JSL:16}, but here the system varies
in a continuum and not in a countable set. On the other hand, the
reachability properties of bilinear ensemble systems are studied
in~\cite{WZ-JSL:21}. In~\cite{JQ-AZ-JSL:13}, the authors explore the
problem of optimal control for stochastic linear systems. The
stochastic nature comes from additive noise to the
dynamics. Reference~\cite{MC-VYC-DD:18} solves an ensemble control
problem for devices with cyclic energy consumption patterns by
utilizing techniques from the Markov Decision Process framework.
Finally, the work~\cite{AH-RM-DD-YD:20} looks into a similar problem
but with uncertain dynamics and uses techniques from stochastic and
distributionally robust optimization in the process. However, in all the aforementioned work, either the control signal needs to be unique, or a continuum of ensemble (and partially unknown) systems is required to exist. This work aims to bridge this gap.

\paragraph{Contributions}
Unlike ensemble control, we let the system parameters
take values in a countable set. Moreover, we do not restrict the
control to be unique but allow every sample run of the system to
employ a different control function. This being said, we consider a so-called class of stochastic ensemble systems, which
arise from the approximation of systems whose exact parameters are
unknown. A known approach to obtain good parametric models consists of
learning the distribution of for the model whose realizations
correspond to a system approximation. As more data becomes available,
the distributions become more accurate, resulting in a stochastic
ensemble approximation. This enables us to i) study   
the reachability properties of a new class
of stochastic ensemble systems, ii) provide a procedure for combining
controls from such systems, iii) compute a control for the limiting DLTI
system, iv) characterize the variance of the state and control
of a stochastic ensemble system produced by means of a Bernoulli
distribution, and v) derive an improved result by means of least squares error
minimization. 
 As conclusions of these main contributions, we also can 
 vi) compute the variance of the state and control of a
 stochastic ensemble system that is obtained in an streaming
 fashion. 
 We illustrate our results in numerical examples.}
%
%
%
%

\paragraph{Notations}

We denote the set of real numbers, the set of non-negative real
numbers, the set of integers, and the set of non-negative integers
using $\real$, $\real_{\geq 0}$, $\integer$, and $\integernonnegative$,
respectively. We let $\real^n$ (similarly $\integernonnegative^n$) be the
Cartesian product of $\real$ (similarly $\integernonnegative$) with itself. $\real^{n \times m}$ denotes the set of real matrices of order
$n \times m$. 
The $i\tth$ component of
a vector $\v \in \real^n$ is denoted by $\v_i$ and the $ij\tth$ entry
of a matrix $\M \in \real^{n \times m}$ is denoted by $\M_{ij}$. For a
set of square matrices $\{\M^{(s)}\}_{i \in \integer}$, the product
operator is used to denote multiplication from right to left,
\emph{i.e.} $\prod_{i=1}^n \M^{(s)} \ldef \M^{(n)}\cdots\M^{(1)} $. For
a vector $\v \in \real^n$, $\v \geq 0$ denotes a component-wise
inequality.  We let $\abs :\real^{{n}} \to \real^{{n}}$ be a function
that produces a vector with component-wise absolute values from the
input vector, \emph{i.e.} $\abs(\v)_i = |\v_i| \in \real$,
$\forall \, i \in \until{n}$.
We denote the empty set using $\varnothing$.
\section{Problem Formulation}
Consider a discrete-time, unknown, linear time-invariant (LTI) system of the form

\vspace{-.4cm}
{\small
\begin{equation}
	\x(k+1) = \A\x(k) + \B\u(k),
	\label{eq:dlti}
\end{equation}}
where $\x \in \real^{{n}}$ is the state, $\x(0) = \x_0$ and $\x_f$ are the initial and desired final states, respectively, $\u \in \real^{{m}}$ is the control input, $\A \in \real^{{n}
  \times {n}}$ and $\B \in \real^{{n} \times {m}}$. 
The system matrices $\A$ and $\B$, as well as the initial state 
$\x_0$ are unknown, but it is possible to construct
increasingly good approximations of these system parameters using some
known approach, e.g., any data-driven or Machine Learning-based
method. This approximation is assumed to be done through the following newly defined class of 
\textit{stochastic ensemble systems}, for which we aim to characterize its reachability properties. 

\begin{Definition}\thmtitle{Stochastic ensemble system}
  Consider sequences of
  independent random variables
  $\scriptxo \ldef\{\x^{(s)}_0 \in \real^n \}_{s \in
    \integernonnegative}$,
  $\scriptA \ldef\{\A^{(s)} \in \real^{n \times n}\}_{s \in
    \integernonnegative}$ and
  $\scriptB \ldef\{\B^{(s)} \in \real^{n \times m}\}_{s \in
    \integernonnegative}$
such that $\scriptA, \scriptB,$ and $\scriptxo$ are all
independent of each other. For every $\sigma \in \integernonnegative$, a stochastic ensemble system is a linear, time
varying (LTV) system of the form: 

\vspace{-.4cm}
{\small
\begin{subequations}
\begin{align}
	\label{eq:sparse_dyn}  \xhat(\sigma,k\hspace{-.1cm}+\hspace{-.1cm}1) &\hspace{-.1cm}=\hspace{-.1cm} \Ahat(s_A(\sigma,k))\,\xhat(\sigma,k)
	 \hspace{-.1cm}+\hspace{-.1cm} \Bhat(\smpind_B(\sigma,k))\,\uhat(\sigma,k) \\
	\label{eq:sparse_init}  \xhat(\sigma,0) &\hspace{-.1cm}= \x_0^{(\smpind_0(\sigma))} \in \Mcal{X}_0\,,	
\end{align}
\label{eq:sparse_ltv_dyn}
\end{subequations}}
where, the \mo{indicator} functions
   $s_A:\integernonnegative \times \integernonnegative \to \integernonnegative, \quad s_B:\integernonnegative \times \integernonnegative \to \integernonnegative$ and
   $s_0:\integernonnegative \to \integernonnegative$ 
are 
used to index different realizations of the random variables in $\Mcal{X}_0$, $\Mcal{A}$, and $\Mcal{B}$, respectively. Thus, $\Ahat (s_A(\sigma, k))$ (similarly $\Bhat(s_B(\sigma, k))$) represents a realization of
  $\A^{(s_A(\sigma, k))} \in \scriptA$ (respectively $\B^{(s_B(\sigma, k))} \in \scriptB$) for each $\sigma$ and $k$. 
  \bulletend
  \end{Definition}
  We use the terms random variable and realization interchangeably.
 The $\sigma$ parameter is used to distinguish between different runs of the
system. Also note that since the initial state and the system matrices in \eqref{eq:sparse_ltv_dyn} are realizations of different random variables, each state $\xhat(\sigma,k)$ is also a realization of some random variable.
Moreover, we need to formally define the notion of \emph{stochastic ensemble approximation}. 
\begin{Definition} \label{def:sparse_approx} \thmtitle{Stochastic
    ensemble approximation} 
    Let $\M \in \real^{n \times m}$. Consider a sequence of independent
  random variables
  $\mathcal{M}\ldef\{\M^{(s)}\}_{s \in \integernonnegative}$ such that
  $\exists B \in \real$ such that $\|\M^{(s)}\| \leq B$, 
  $\forall s \in
  \integernonnegative$. 
  Then, $\Mcal{M}$ is a stochastic ensemble approximation of $\M$ if
  and only if 
  
  \vspace{-.4cm}
  {\small
  \begin{align*}
  \forall  \varepsilon > 0, \ \mathbb{P}({\|\M - \textstyle{\sum}_{s = 1}^d w_s \M^{(s)} \|_p > \varepsilon }) \to 0 \,\,\, \mathrm{as} \, d \to \infty,
  \end{align*}}
  with respect to some $p-$norm $\| \cdot \|_p$
and for some $w_s$'s of the form
  $[w_1,\cdots,w_d]^\top \in \smp^{d} \ldef \{\y \in \real^d \,|\, \one^\top \y = 1, \y \geq 0\}$ for each $d \in \integernonnegative$.
\bulletend
\end{Definition}
Furthermore, we assume the following.
\begin{Assumption} \label{ass:stoch_samp}
There exists a stochastic ensemble system in the form of \eqref{eq:sparse_ltv_dyn} that
consists of \emph{stochastic ensemble approximations} of the original system \eqref{eq:dlti} as defined below. 
\end{Assumption}
Being concerned with 
meeting certain reachability requirements for \eqref{eq:sparse_ltv_dyn}, next we provide a formal definition for the notion of approximate reachability. 

\begin{Definition} \label{def:approx_reach}
    \thmtitle{Approximate reachability of final state from an initial
      state} Let $\mathcal{X}_0$, $\mathcal{A}$, and $\mathcal{B}$ be
    stochastic ensemble approximations of $\x_0$, $\A$, and $\B$
    respectively.  Consider the stochastic ensemble system
    in~\eqref{eq:sparse_ltv_dyn} and let $\x_f \in \real^n$. Suppose
    there exists a fixed time $K \in \integernonnegative$ and for any realization of the random variables $\xhat(\sigma,0), \Ahat(s_A(\sigma,k)),$ and $\Bhat(s_B(\sigma,k))$ 
    there exist controls
    $\uhat(\sigma,0),\cdots,\uhat(\sigma,K-1)$ that drive the realization $\xhat(\sigma,0)$ through the trajectory
    $\xhat(\sigma,k)$, $\forall k \in \until{K}$ under the
    dynamics~\eqref{eq:sparse_ltv_dyn}, \emph{i.e.}
    
   \vspace{-.4cm}
    {\small
\begin{align}
  \notag \makesmall{ \xhat(\sigma,K)} & = \makesmall{\textstyle{\prod}_{k=0}^K \Ahat(s_A(\sigma,k)) \,
                         \xhat(\sigma,0)}
  \\
                       & \makesmall{+ \textstyle{\sum}_{k=0}^{K-1}
                         [ \textstyle{\prod}_{i=k+1}^{K} \Ahat(s_A(\sigma,i)) ] \Bhat(s_B(\sigma,k)) \uhat(\sigma,k)}  \,.
	\label{eq:sparse_control_necessary}
\end{align}}
Then $\x_f$ is said to be approximately reachable from $\x_0$ if and only if $\{\xhat(\sigma,K)\}_{\sigma \in \integernonnegative}$ is a stochastic ensemble approximation of $\x_f$.
\bulletend
\end{Definition}

Further, we need to formally introduce the notion of sample reachability as follows.
\begin{Definition} \label{def:sample_reach} \thmtitle{Sample reachability of final state from an initial state in unit time} 
Let $\mathcal{X}_0$, $\mathcal{X}_f$, and $\mathcal{A}$ be stochastic ensemble approximations of $\x_0$, $\x_f$ and $\A$ respectively, \mo{and $s_f : \integernonnegative \to \integernonnegative$ be an indicator function  that denotes the realizations of the random variables in the set $\scriptxf \ldef \{\x_f^{(s)} \in \real^n\}_{s \in \integernonnegative}$}. 
Consider the stochastic ensemble system in~\eqref{eq:sparse_ltv_dyn} with $\Bhat(k) = \B$.
Suppose there exists a control $\{\uhat(\sigma,0)\}_{\sigma \in \integernonnegative}$ for each $\sigma \in \integernonnegative$ that drives $\xhat(\sigma,0) = \x_0^{(s_0(\sigma))}$ to $\xhat(\sigma,K) = \x_f^{(s_f(\sigma))}$ under the dynamics~\eqref{eq:sparse_ltv_dyn}, in one ($K = 1$) time step, \emph{i.e.} 

\vspace{-.4cm}
{\small
\begin{align}
	\x_f^{(s_f(\sigma))} & = \Ahat(s_A(\sigma,1)) \,\, \x_0^{(s_0(\sigma))} + \B \, \uhat(\sigma,0)  \,.
	\label{eq:sparse_control}
\end{align}}
Then $\x_f$ is  sample reachable from $\x_0$ in unit time.
\bulletend
\end{Definition}
Note that the control $\uhat(\sigma,0)$ is chosen in order to drive $\x_0^{(s_0(\sigma))}$ to $\x_f^{(s_f(\sigma))}$ in the previous definition.

Our problem of interest can cast as follows. 
\begin{problem}
Given the aforementioned setup and Assumption \ref{ass:stoch_samp}, characterize the relation between approximate reachability under the stochastic ensemble system \eqref{eq:sparse_ltv_dyn} and the standard  reachability under the LTI system \eqref{eq:dlti}. Moreover, find controls that drive $\x_0$ to $\x_f$ under~\eqref{eq:dlti} using the stochastic ensemble system~\eqref{eq:sparse_ltv_dyn}. 
\end{problem}
We conclude this section by highlighting the difference between our notion of stochastic ensemble system with that of (deterministic) ensemble control.

\begin{Remark}\label{rem:ensemble_control}
\thmtitle{Comparison with ensemble control} {\rm The stochastic
  ensemble system in~\eqref{eq:sparse_ltv_dyn} has a similar structure
  to ensemble systems~\cite{JSL-NK:07}. However, the $\sigma$ that
  parameterizes the system takes discrete values. Moreover, $\Ahat$ and $\Bhat$
  do not vary continuously with respect to $\sigma$. We also allow the controls
  for each $\sigma$ to be different, i.e., we do not seek one single control
  that steers the ensemble system between points of interest. Finally, if a system is ensemble controllable, then similar results to
  the ones in this paper can be given, generalizing
  ensemble control systems and their objective.}  \bulletend
\end{Remark}
\vspace{-.1cm}


\section{Control using Stochastic Ensemble Systems}

In this section, we first show that reachability under the DLTI system~\eqref{eq:dlti} is sufficient for approximate reachability\footnote{Not to be misinterpreted as the reachability of autonomous systems, studied e.g., in the authors' previous works \cite{khajenejad2023tight,DirectPol22}.} under the stochastic ensemble systems. 

Recall that we use $\sigma \in \integernonnegative$ to index the runs of the system.
Moreover, $s_0(\sigma)$ (respectively $s_A(\sigma,k)$, $s_B(\sigma,k)$) indexes the realization of the initial state (respectively of matrices $\A$ and $\B$ at the $k\tth$ time step) used in~\eqref{eq:sparse_ltv_dyn}. In the sequel, we omit the $\sigma$ for the sake of brevity wherever there is no confusion.
Now we are ready to state the result.

\begin{lemma}\label{lem:sparse_control_necessary}
\thmtitle{Sufficient condition for approximate reachability using state reachability} 
Let $\mathcal{X}_0$, $\mathcal{A}$, and $\mathcal{B}$ be stochastic ensemble approximations of $\x_0$, $\A$, and $\B$ respectively. Suppose that $\x_f$ is reachable from $\x_0$ in $K \in \integernonnegative$ time steps under the dynamics~\eqref{eq:dlti}.  Then, $\x_f$ is approximately reachable from $\x_0$ under the stochastic ensemble system in~\eqref{eq:sparse_ltv_dyn}.
\end{lemma}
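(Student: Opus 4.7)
The plan is to lift the DLTI control sequence directly into the ensemble. By the reachability assumption there exist $\u(0),\dots,\u(K-1)$ satisfying
$$\x_f=\A^K\x_0+\sum_{k=0}^{K-1}\A^{K-k-1}\B\,\u(k),$$
and I would set $\uhat(\sigma,k)\ldef\u(k)$ uniformly in $\sigma$. The remainder of the argument is to verify that, under this constant-in-$\sigma$ control choice, the family $\{\xhat(\sigma,K)\}_\sigma$ satisfies Definition~\ref{def:sparse_approx}, i.e., admits simplex weights whose linear combination converges in probability to $\x_f$.

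To produce such weights I would exploit the freedom in the indicator functions by enumerating $\sigma$ as a multi-index $(s_0,s_1,\dots,s_K,t_0,\dots,t_{K-1})\in\until{d}^{2K+1}$ and wiring $\xhat(\sigma,0)=\x_0^{(s_0)}$, $\Ahat(s_A(\sigma,k))=\A^{(s_{k+1})}$, $\Bhat(s_B(\sigma,k))=\B^{(t_k)}$, using disjoint index blocks across time steps so that the involved realizations are mutually independent. Choosing the product weights $w_\sigma \ldef w^{\x_0}_{s_0}\prod_{j=1}^{K}w^{\A}_{s_j}\prod_{j=0}^{K-1}w^{\B}_{t_j}$, where the marginals come from the stochastic ensemble approximations of $\x_0,\A,\B$, the weighted aggregate factors, by distributivity, as
$$\sum_\sigma w_\sigma\xhat(\sigma,K)=\widehat{\A}_K\cdots\widehat{\A}_1\widehat{\x}_0+\sum_{k=0}^{K-1}\widehat{\A}_K\cdots\widehat{\A}_{k+2}\widehat{\B}_k\u(k),$$
with each $\widehat{\A}_j,\widehat{\B}_k$ an independent simplex-weighted average of matrix realizations and $\widehat{\x}_0$ likewise for the initial condition, each converging in probability to $\A,\B,\x_0$ by Definition~\ref{def:sparse_approx}.

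The remaining and principal technical step is to pass the convergence in probability through the products. I would use a telescoping identity of the form
$$\widehat{\A}_K\cdots\widehat{\A}_1-\A^K=\sum_{j=1}^K\widehat{\A}_K\cdots\widehat{\A}_{j+1}(\widehat{\A}_j-\A)\A^{j-1}$$
and apply the triangle inequality to bound the left-hand norm by a sum controlled by $\max_j\|\widehat{\A}_j-\A\|_p$ times a constant depending only on $\|\A\|_p$ and on the operator norms of the remaining factors. Here the uniform bound $\|\A^{(s)}\|\le B$ in Definition~\ref{def:sparse_approx} is crucial: since each $\widehat{\A}_j$ is a convex combination of realizations, $\|\widehat{\A}_j\|_p\le B$, keeping the telescoped bound finite and independent of $d$. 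An analogous telescoping handles the $\widehat{\B}_k$ terms. A union bound over the finitely many approximation events then yields $\|\sum_\sigma w_\sigma\xhat(\sigma,K)-\x_f\|_p\to 0$ in probability as $d\to\infty$, which is exactly approximate reachability of $\x_f$ from $\x_0$.

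The hard part will be the bookkeeping for this telescoped bound: verifying that the chosen enumeration of $\sigma$ truly produces mutually independent weighted sums so the marginal convergences can be combined, and propagating the uniform norm bound cleanly through each of the $K$ factors. Once the telescoping inequality is established, conversion to convergence in probability via union bounds is routine.
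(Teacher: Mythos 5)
Your proposal is correct and follows essentially the same route as the paper: the paper likewise sets $\uhat(\sigma,k)=\u(k)$, uses the uniform bounds on the realizations guaranteed by Definition~\ref{def:sparse_approx}, and combines the three marginal convergences (for $\x_0$, $\A$, $\B$) via triangle-inequality bounds and a union-bound-type probability estimate. The only difference is organizational: the paper proves the one-step claim that $\{\xhat(\sigma,1)\}$ is a stochastic ensemble approximation of $\x(1)$ and then inducts on the time step, which is exactly the step-by-step unrolling of your global telescoping identity with product weights over the multi-indexed $\sigma$.
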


\begin{proof}
Since $\scriptxo, \scriptA,$ and $\scriptB$ are stochastic ensemble approximations of $\x_0$, $\A$, and $\B$ respectively $\exists B_\scriptxo, B_\scriptA$, and $B_\scriptB$ such that $\|\x_0^{(s)}\| \leq B_\scriptxo$, $\|A^{(s)}\| \leq B_\scriptA$, and $\|B^{(s)}\| \leq B_\scriptB$, $\forall s \in \integernonnegative$.
  Now, by the hypothesis, there exists controls $\u(0), \cdots, \u(K-1)$
  that drive $\x_0$ to $\x_f$ under the
  dynamics~\eqref{eq:dlti}. Moreover since $K$ is finite, all the controls can be bounded by a constant $B_\u$.
    We show that
  $\uhat(\sigma,k) = \u(k)$, $\forall k \in \until{K-1}$ 
proves the claim.
Let $\varepsilon > 0$. If we use the controls $\u(0), \cdots, \u(K-1)$ in~\eqref{eq:sparse_control_necessary}, we obtain
\mo{
\begin{align*}
\begin{array}{c}
	\makesmall{\y(1) \hspace{-.1cm}- \hspace{-.1cm} \sum_{s_0 = 1}^N w_{s_0,1} \xhat(s(1))   \hspace{-.1cm}=  \hspace{-.1cm}\Ahat(s_A(1)) ( \x_0 \hspace{-.1cm} - \hspace{-.1cm} \sum_{s_0 = 1}^N w_{s_0,1} \x_0^{(s_0)} )}  \\
	 \makesmall{+ \Bhat(s_B(1)) ( \u(0) - \sum_{s_0 = 1}^N w_{s_0,1} \u(0) ),}
	 \end{array}
\end{align*}}
at $k=1$ starting at $\x_0^{(s_0)}$ and for $[w_{s_0,1}] = [w_{1,1},\cdots,w_{N,1}] \in \smp^N$ which makes $\scriptxo$ a stochastic ensemble approximation of $\x_0$ and $\y(1) = \Ahat(s_A(\sigma,1)) \x_0 + \Bhat(s_B(\sigma)) \u(0)$.
Note that here we are using the weights that make $\Mcal{X}_0$ a stochastic ensemble approximation throughout and we use the independence properties of the random variables to separate the product and the sum.  
Now, since $\sum_{s_0 = 1}^N w_{s_0,1} \u(0) = \u(0)$, then

\vspace{-.4cm}
{\small
\mo{
\begin{align}
	 & {\| \y(\hspace{-.05cm}1\hspace{-.05cm}) \hspace{-.1cm}-\hspace{-.1cm} \textstyle{\sum}_{s_0 = 1}^N \hspace{-.05cm}w_{s_0,1} \xhat(s(\hspace{-.05cm}1\hspace{-.05cm})\hspace{-.05cm})\hspace{-.05cm} \|
	 \hspace{-.1cm} \leq\hspace{-.1cm} B_\scriptA \| \x_0 \hspace{-.1cm}- \hspace{-.1cm}\textstyle{\sum}_{s_0 = 1}^N \hspace{-.05cm}w_{s_0,1} \x_0^{(s_0)}\hspace{-.05cm} \| }.
	  \label{eq:for_tr_ineq1}
\end{align}
}}
Next, by defining $\z(1) = \Ahat(s_A(1)) \x_0 + \B \u(0)$
and considering a $[w_{s_B(1),2}] = [w_{1,2},\cdots,w_{N,2}] \in \smp^N$, $\scriptB$ becomes a stochastic ensemble approximation of $\B$. Then, by similar arguments as before
\vspace{-.2cm}
{\small
\mo{
\begin{align}\label{eq:for_tr_ineq2}
\begin{array}{rl}
	  &\| \z(1) - \textstyle{\sum}_{s_B(1) = 1}^N w_{s_B(1),2} \, \y(1) \| 
	 \hspace{-.1cm} \leq \\
	  & B_\u \| \B - \textstyle{\sum}_{s_B(1) = 1}^N w_{s_B(1),2} \B(s_B(1)) \|.	 
	  \end{array}
\end{align}
}}
Finally, for $\x(1) = \A \x_0 + \B \u(0)$, we have 
\vspace{-.2cm}
\mo{
\begin{align}\label{eq:for_tr_ineq3}
\begin{array}{rl}
	 & \makesmall{ \| \x(1) - \textstyle{\sum}_{s_A(1) = 1}^N w_{s_A(1),3} \z(1) \|} \\
	 & \leq \makesmall{ \| \A - \textstyle{\sum}_{s_A(1) = 1}^N w_{s_A(1),3} \Ahat(s_A(1)) \| \|\x(0)\|},
	 \end{array}
\end{align}
}
using $[w_{s_A(1),2}] = [w_{1,2},\cdots,w_{N,2}] \in \smp^N$ which makes $\scriptA$ a stochastic ensemble approximation of $\A$.
Then, using the triangle inequality of the norms and~\eqref{eq:for_tr_ineq1}--\eqref{eq:for_tr_ineq3}, we \mo{obtain}
\vspace{-.2cm}
\mo{
\begin{align*}
\begin{array}{rll}
	\hspace{-.3cm}& \makesmall{\| \x(1) - \textstyle{\sum}_{s_0 = 1}^N w_{s_0,1} \xhat(\sigma,1) \| \leq}\\
	\hspace{-.3cm}& \makesmall{ \alpha [ \| \x_0 \hspace{-.1cm}- \hspace{-.1cm}\sum_{s_0 = 1}^N w_{s_0,1} \x_0^{(s_0)} \| \hspace{-.1cm}+ \hspace{-.1cm}\| \B \hspace{-.1cm}- \hspace{-.1cm}\sum_{s_B(1) = 1}^N \hspace{-.05cm}w_{s_B(1),2} \B(s_B(1)\hspace{-.05cm}) \|}\\
	\hspace{-.3cm}& \makesmall{+ \| \A - \textstyle{\sum}_{s_A(1) = 1}^N w_{s_A(1),3} \Ahat(s_A(1)) \| ]},
	\end{array}
\end{align*}
}
where $\alpha = \max\{B_\scriptA,B_\u,\|\x(0)\|\}$. Hence,
\vspace{-.2cm}
\mo{
\begin{align*}
\begin{array}{rlll}
	 & \makesmall{\prob{\| \x(1) - \textstyle{\sum}_{s_0 = 1}^N w_{s_0,1} \xhat(\sigma,1) \| < \varepsilon}) \geq} \\
	 & \makesmall{\mathbb{P}({\| \A - \textstyle{\sum}_{s_A(1) = 1}^N w_{s_A(1),3} \Ahat(s_A(1)) \|  < \frac{\varepsilon}{3\alpha}}) +} \\
	 & \makesmall{\mathbb{P}({\| \B - \textstyle{\sum}_{s_B(1) = 1}^N w_{s_B(1),2} \B(s_B(1)) \| < \frac{\varepsilon}{3\alpha}}) +} \\
	 & \makesmall{\mathbb{P}({\| \x_0 - \textstyle{\sum}_{s_0 = 1}^N w_{s_0,1} \x_0^{(s_0)} \| < \frac{\varepsilon}{3\alpha}}) \to 1 \,\,\mathrm{as}\,\, N \to \infty}.
\end{array}
\end{align*}
}
Thus, $\{\xhat(s(1))\}_{s(1) \in \integernonnegative}$ forms a stochastic ensemble approximation of $\x(1)$. 
Thus, using the fact that $\{\x_0^{(s_0)}\}_{s_0 \in
  \integernonnegative}$, $\{\Ahat(s_A(1))\}_{s_A(1) \in
  \integernonnegative}$, and $\{\Bhat(s_B(1))\}_{s_B(1) \in
  \integernonnegative}$ are stochastic ensemble approximations of
$\x_0$, $\A$, and $\B$ respectively, we have shown that $\{\xhat(s(1))\}_{s(1) \in
  \integernonnegative}$ forms a stochastic ensemble approximation of
$\x(1)$. Next using the fact that
$\{\xhat(s(1))\}_{s(1) \in \integernonnegative}$,
$\{\Ahat(s_A(2))\}_{s_A(2) \in \integernonnegative}$, and
$\{\Bhat(s_B(2))\}_{s_B(2) \in \integernonnegative}$ are stochastic
ensemble approximations of $\x(1)$, $\A$, and $\B$ respectively we can
show that $\{\xhat(s(2))\}_{s(2) \in \integernonnegative}$ forms a
stochastic ensemble approximation of $\x(2)$. By induction on
$\x(1),\cdots,\x(K)$, the proof can be completed.
\end{proof}

Note that the previous lemma states that the desired state $\x_f$ is reachable from $\x_0$ only if $\x_f$ is approximately reachable from $\x_0$. Also, note that we deal with a much weaker version of reachability here. All we require is that $\x_f$ be in the reachable subspace from $\x_0$. We do not require the whole $\real^{{n}}$ to be reachable.
  
Next, \mo{by using the notion of sample reachability (Definition~\ref{def:sample_reach}), we synthesize a control input sequence that drives the system in~\eqref{eq:dlti} from $\x_0$ to $\x_f$.} 
 
\begin{lemma}\label{lem:single_step_ltv}
\thmtitle{Approximation of control of LTI systems using stochastic ensemble systems}
Consider the system in~\eqref{eq:dlti} with initial state $\x_0$ and desired final state $\x_f$. Let $\mathcal{X}_0$, $\mathcal{X}_f$ and $\mathcal{A}$ be stochastic ensemble approximations of $\x_0$, $\x_f$ and $\A$ respectively. Next consider the stochastic ensemble system in~\eqref{eq:sparse_ltv_dyn} with $\Bhat(k) = \B$, $\forall k \in \until{K-1}$.
Suppose that $\x_f$ is sample reachable from $\x_0$ in unit time. Let $\{\uhat(\sigma,0)\}_{\sigma \in \integernonnegative}$ be as in Definition~\ref{def:sample_reach}.
Then for each $\varepsilon > 0$,
{\small
\begin{align}
	\lim_{N\to \infty} \mathbb{P}({\|\B\v - \B \textstyle{\sum}_{\sigma = 1}^N w_{\sigma} \uhat(\sigma,0)\| > \varepsilon}) \to 0
\end{align}}
with $w_\sigma$'s of the form $[w_1,\cdots,w_N]^\top \in \smp^{N}$ for each $N \in \integernonnegative$ and with $\B\v$ satisfying
\begin{align}
	\x_f =  \A \x_0 + \B \, \v \,.
	\label{eq:actual_control}
\end{align}
\end{lemma}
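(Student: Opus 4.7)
The plan is to subtract the sample-reachability identity from the true control equation and control the residual in probability. Isolating the $\B$-terms from \eqref{eq:sparse_control} and \eqref{eq:actual_control} gives $\B\uhat(\sigma,0)=\x_f^{(s_f(\sigma))}-\Ahat(s_A(\sigma,1))\x_0^{(s_0(\sigma))}$ and $\B\v=\x_f-\A\x_0$. Taking a convex combination of the first identity over $\sigma$ with weights $\{w_\sigma\}\in\smp^N$ and subtracting from the second, the triangle inequality bounds $\|\B\v-\B\sum_\sigma w_\sigma\uhat(\sigma,0)\|$ by the sum of two residuals, namely $\|\x_f-\sum_\sigma w_\sigma\x_f^{(s_f(\sigma))}\|$ and $\|\A\x_0-\sum_\sigma w_\sigma\Ahat(s_A(\sigma,1))\x_0^{(s_0(\sigma))}\|$. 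A union bound splitting $\varepsilon$ in half then reduces the claim to showing that each residual vanishes in probability.

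The first residual is exactly the stochastic ensemble approximation error of $\scriptxf$ against $\x_f$, so Definition~\ref{def:sparse_approx} directly supplies weights making it small in probability. For the second residual, I would invoke Lemma~\ref{lem:sparse_control_necessary} in the special case $K=1$, $\Bhat(k)\equiv\B$, and trivial control $\u(0)=\zero$; that lemma then asserts that $\{\Ahat(s_A(\sigma,1))\x_0^{(s_0(\sigma))}\}_\sigma$ is itself a stochastic ensemble approximation of $\A\x_0$, so Definition~\ref{def:sparse_approx} again furnishes weights driving this residual to zero in probability.

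The main obstacle is that the two appeals to Definition~\ref{def:sparse_approx} only guarantee the existence of \emph{separate} weight sequences for the two residuals, whereas the lemma's conclusion requires a \emph{single} sequence $\{w_\sigma\}$ handling both approximations at once. I would resolve this by selecting $\{w_\sigma\}$ as a joint minimizer over $\smp^N$ of the sum of the two residual norms. Since each residual individually admits weights in $\smp^N$ making it arbitrarily small in probability, the infimum of their sum over $\smp^N$ must also tend to zero in probability; the mutual independence of $\scriptxf$ with $(\scriptA,\scriptxo)$ ensures that the two approximation problems do not destructively interact. A final union bound on the two tail events then completes the argument.
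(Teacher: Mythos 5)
Your overall route is the paper's route: subtract the sampled one-step identity \eqref{eq:sparse_control} from the exact identity \eqref{eq:actual_control}, apply the triangle inequality, and reduce the claim to the ensemble-approximation errors via a union bound. The paper splits the error into three terms --- the $\x_f$-error, the $\A$-error scaled by $\|\x_0\|$, and the $\x_0$-error scaled by the uniform bound $B_{\scriptA}$ on $\|\Ahat\|$ --- and uses the threshold $\varepsilon/(3\beta)$ with $\beta=\max\{B_{\scriptA},1,\|\x_0\|\}$; your variant of folding the $\A$- and $\x_0$-errors into a single term by invoking Lemma~\ref{lem:sparse_control_necessary} with $K=1$ and $\u(0)=\zero$ is legitimate and equivalent in substance, since that lemma's proof performs exactly the same two-term split internally.

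The genuine gap is your final step. From the existence of weights $w^{(1)}$ making the $\x_f$-residual $R_1$ small and weights $w^{(2)}$ making the $\A\x_0$-residual $R_2$ small, you conclude that $\min_{w\in\smp^N}\bigl(R_1(w)+R_2(w)\bigr)$ must be small. That inference is false in general: the free inequality is $\inf_w(R_1+R_2)\ \geq\ \inf_w R_1+\inf_w R_2$, which points the wrong way, and the minimizers of $R_1$ and $R_2$ may sit at opposite ends of the simplex (take $R_1(w)=|w_1|$ and $R_2(w)=|1-w_1|$ on $\smp^2$: each infimum is $0$, yet $R_1(w)+R_2(w)\geq 1$ for every $w$). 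Independence of $\scriptxf$ from $(\scriptA,\scriptxo)$ is a statement about the random variables, not about the existence of a common deterministic weight vector, and Definition~\ref{def:sparse_approx} only guarantees weights separately for each family. To be fair, the paper's own proof is silent on this same point --- it carries distinct weight vectors $w_{s_0,1}$, $w_{s_f,2}$, $w_{s_A(1),3}$ for the three approximations and implicitly treats them as compatible, which is justified in the sampling-based setting of Section~4 where all weights are uniform, $w_\sigma=1/N$, and a Hoeffding/law-of-large-numbers argument applies to every family simultaneously. So to close your argument you must either add the assumption (or construction) that one weight sequence works simultaneously for all families realized along the runs $\sigma$ (e.g., uniform weights as in Lemma~\ref{lem:unif_sample}), or prove the conclusion with per-family weights as the paper's proof effectively does; the joint-minimization trick, as justified, does not deliver it.
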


\begin{proof}
We do this in a very similar way to the proof of Lemma~\ref{lem:sparse_control_necessary}. So we skip most of the details due to lack of space. Consider an $ \varepsilon > 0$. Then, consider the state evolution equations
{\small
\begin{subequations}
\begin{align}
	\label{eq:sparse_control_step1} \x_f^{(s_f)} & = \Ahat(s_A(1)) \x_0 + \B \u'(0)  \,, \\
	\label{eq:sparse_control_step3} \x_f & =  \Ahat(s_A(1)) \x_0 + \B \u''(0) \,.
\end{align}
\end{subequations}}
Note that~\eqref{eq:sparse_control_step1} has been formulated by choosing controls such that they drive the initial condition $\x_0$ (and not the realizations) to the final state $\x_f^{(s_f)}$. Similarly~\eqref{eq:sparse_control_step3} has been formulated by choosing controls such that they drive the initial condition $\x_0$ to the final state $\x_f$ (and not the realizations). Then, using very similar arguments as in the proof of Lemma~\ref{lem:sparse_control_necessary} it is possible to show that the controls $\u',$ and $\u''$ exist and can be attained by taking the weighted sum of the controls in~\eqref{eq:sparse_control}. 
%

Moreover, using the triangle inequality of the norms and probability theory, it can be shown that

\vspace{-.4cm}
{\small
\mo{
\begin{align*}
	& \makesmall{\mathbb{P}({\| \B\v - \B \textstyle{\sum}_{s_0 = 1}^N w_{s_0,1} \uhat(s_0)\| < \varepsilon}) \geq} \\
	 & \makesmall{\mathbb{P}({\| \A - \textstyle{\sum}_{s_A(1) = 1}^N w_{s_A(1),3} \Ahat(s_A(1)) \| < \frac{\varepsilon}{3\beta}}) +} \\
	 & \makesmall{\mathbb{P}({\| \x_f - \textstyle{\sum}_{s_f = 1}^N w_{s_f,2} \x_f^{(s_f)} \|  < \frac{\varepsilon}{3\beta} }) +} \\
	 & \makesmall{\mathbb{P}({\| \x_0 - \textstyle{\sum}_{s_0 = 1}^N w_{s_0,1} \x_0^{(s_0)} \| < \frac{\varepsilon}{3\beta}}) \to 1 \,\,\mathrm{as}\,\, N \to \infty}\,,
\end{align*}}}
where 
	$\beta = \max\{B_\scriptA,1,\|\x_0\|\}$.
\end{proof}

The previous result states that if $\x_f$ is sample reachable from $\x_0$ in unit time, $\x_f$ is also reachable under~\eqref{eq:dlti}. It is worthwhile to note that we are not approximating $\B$ using a stochastic ensemble approximation. If we did, then essentially we would introduce convolution-like sums which makes the problem more complicated. Further, if $\x_f$ is reachable from $\x_0$ only in multiple time steps, then the matter of non-unique control sequences poses another problem. 
%

Note that the claim in Lemma \ref{lem:single_step_ltv} works for any sequence of random variables that form stochastic ensemble approximations as defined in Definition~\ref{def:sparse_approx}. In the next section we study stochastic ensemble systems that are generated from sampling the DLTI system parameters in~\eqref{eq:dlti}.

\section{On Stochastic Ensemble Systems Generated through Sampling} \label{sec:unif_sample}

%
%
Here, we deal with a particular kind of stochastic ensemble system whose realizations are produced from samples of \eqref{eq:dlti}. This corresponds to the following scenario. Consider the individual components of the states to be nodes and the $\A$ matrix describing the interconnection between them. Suppose each node is aware of which neighbors are antagonistic and which neighbors are cooperative, but is unaware of the absolute magnitude of influence of its neighbors. The realizations of $\Ahat$ are hence obtained based on the relative order of influence of other nodes on a particular node.

In such a scenario, we assume that the stochastic ensemble system is produced in the following way. Suppose the vectors and matrices are transformed into probability mass functions over an underlying sample space. This mass function produces samples from the sample space which correspond to the realizations of the stochastic ensemble system. \mo{In particular}, we first define a function $\gamma : \real^n \to \real$, as
	$\gamma(\w) \ldef \one^\top \abs(\w)$,
\mo{and with a slight abuse of notation, we consider} $\gamma : \real^{n \times m} \to \real$ as
	$\gamma(\M) \ldef \max_{j \in \until{m}}\one^\top \abs(\M_{*j})$.
This helps us in describing \mo{a \emph{vector sampling scheme} procedure through the following lemma, whose proof omitted since it is trivial.}
\begin{lemma}\label{lem:unif_sample}
\thmtitle{Vector sampling scheme}
Consider a vector $\w \in \real^{{n}} \setminus \{\zero\}$. Then, let $ \wb \ldef (1/\gamma(\w))\abs(\w)$. Consider $\wb$ as a mass function on the sample space $\Omega = \{1,\cdots,{n}\}$ and let $\{s_1,\cdots, s_N\}$ be $N$ samples of $\Omega$. To each sample $s_i$, associate a vector $\svec^{(i)}$ as
{\small
\begin{align*}
[\svec^{(i)}]_j \ldef
\begin{cases}
	\gamma(\w), & \mathrm{if}\, j = s_i \,\,\mathrm{and} \,\, [\w]_j > 0;\\
	-\gamma(\w), & \mathrm{if}\, j = s_i \,\,\mathrm{and} \,\, [\w]_j < 0;\\
	0, & \mathrm{otherwise} \,.
\end{cases}
\end{align*}}
Then, the mean of the random variable corresponding to the $i\tth$ component of $\sum_{i = 1}^N \svec^{(i)}$ is $\w_i$. Finnally,
	$\lim_{N \to \infty} \mathbb{P}({ \|\w - \frac{1}{N} \textstyle{\sum}_{i = 1}^N \svec^{(i)}\| > \varepsilon}) \to 0,$
for each $\varepsilon > 0$.
\end{lemma}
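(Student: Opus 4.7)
The plan is to verify the mean statement by direct computation and then deduce the convergence in probability by the weak law of large numbers applied componentwise, stitched together with a union bound.

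First, fix a component $j \in \until{n}$ and compute $\expect{[\svec^{(i)}]_j}$. By the defining cases of $\svec^{(i)}$, this random variable equals $\sgn(\w_j)\gamma(\w)$ when $s_i = j$ (which occurs with probability $\wb_j = |\w_j|/\gamma(\w)$ by construction of the mass function $\wb$) and $0$ otherwise. Therefore $\expect{[\svec^{(i)}]_j} = \sgn(\w_j)\gamma(\w)\cdot |\w_j|/\gamma(\w) = \w_j$, so each $\svec^{(i)}$ is an unbiased estimator of $\w$ and, equivalently, the sample average $\tfrac{1}{N}\sum_{i=1}^N \svec^{(i)}$ has mean $\w$. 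This is the content of the first claim (which I read as a statement about the sample mean; the raw sum $\sum_{i=1}^N \svec^{(i)}$ has mean $N\w_j$ at component $j$, and the second half of the lemma is stated in normalized form).

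Second, since the samples $s_1,\dots,s_N$ are drawn i.i.d.\ from the mass function $\wb$, the vectors $\svec^{(1)},\dots,\svec^{(N)}$ are i.i.d.\ as well, and each $[\svec^{(i)}]_j$ is bounded in absolute value by $\gamma(\w)$, hence has finite variance $\gamma(\w)\,|\w_j| - \w_j^2 \leq \gamma(\w)^2$. The weak law of large numbers then gives, for every $j \in \until{n}$ and every $\varepsilon > 0$,
\[
\prob{\bigl|\w_j - \tfrac{1}{N}\textstyle{\sum}_{i=1}^N [\svec^{(i)}]_j\bigr| > \varepsilon} \to 0 \quad \text{as } N\to\infty.
\]

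Finally, to pass from componentwise convergence to convergence in the vector norm, I would use equivalence of norms on $\real^n$ together with a union bound: for any $\varepsilon > 0$ there is a constant $c_n > 0$ (depending on the chosen $p$-norm) such that
\[
\prob{\bigl\|\w - \tfrac{1}{N}\textstyle{\sum}_{i=1}^N \svec^{(i)}\bigr\| > \varepsilon} \leq \sum_{j=1}^{n}\prob{\bigl|\w_j - \tfrac{1}{N}\textstyle{\sum}_{i=1}^N [\svec^{(i)}]_j\bigr| > c_n\varepsilon},
\]
and each term on the right tends to $0$ by the previous step. There is essentially no obstacle; as the authors note, the argument is routine once one recognizes that the construction of $\svec^{(i)}$ was engineered precisely so that $\expect{\svec^{(i)}} = \w$, which reduces everything to the law of large numbers. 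The only mild subtlety is reading the first assertion as a statement about the unbiased estimator (per-sample or sample-mean) rather than the unnormalized sum.
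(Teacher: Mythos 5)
Your proof is correct, and it is exactly the routine argument the authors had in mind when they omitted the proof as trivial: the construction makes each $\svec^{(i)}$ an unbiased estimator of $\w$, componentwise convergence follows from the weak law of large numbers for bounded i.i.d.\ variables, and the vector-norm statement follows by norm equivalence plus a union bound (the same componentwise-plus-norm-constant reduction the paper itself uses later in Lemma 4.2 via Hoeffding). You are also right that the stated mean claim should be read for the normalized sum $\frac{1}{N}\sum_{i=1}^N \svec^{(i)}$ (or per sample), since the unnormalized sum has mean $N\w_i$.
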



Note that for matrices, the same procedure can be performed via a similar transformation by considering each row (or column) separately as vectors.

The stochastic ensemble system generated in this process has nice properties related to the system~\eqref{eq:dlti}. First, it retains the sparsity structure of the DLTI system~\eqref{eq:dlti}. Moreover, since each component is a Bernoulli random variable, using known properties we can provide convergence rates for each realization.  Next, we analyze the variance of the average control computed considering stochastic ensemble approximations produced as in Lemma~\ref{lem:unif_sample}. The next result follows from Hoeffding's inequality.

\begin{lemma}\thmtitle{Variance of average control}
\label{lem:variance}
Let $\mathcal{X}_0$, $\mathcal{X}_f$ and $\mathcal{A}$ be stochastic ensemble approximations of $\x_0$, $\x_f$ and $\A$ respectively generated using Lemma~\ref{lem:unif_sample}. Suppose $\{\uhat(\sigma,0)\}_{\sigma \in \integernonnegative}$ satisfy the hypothesis of Lemma~\ref{lem:single_step_ltv} and let $\v$ be as in~\eqref{eq:actual_control}. Then for each $\varepsilon > 0$,
\mo{
\begin{align}
\begin{array}{rl}
	\notag & \makesmall{\mathbb{P}({\| \B (\v - \frac{1}{N} \textstyle{\sum}_{\sigma = 1}^N\u^{(\smpind)}) \| > \varepsilon}) \leq 2C[ \exp ( - \frac{2 N \varepsilon^2}{9 \beta^2\gamma(\x_f)^2} )} \\
	\notag & \qquad \qquad \makesmall{ + \exp ( - \frac{2 N \varepsilon^2}{9\beta^2\gamma(\x_0)^2} ) + \exp ( - \frac{2 N \varepsilon^2}{9\beta^2\gamma(\A)^2} ) ]},
	\end{array}
\end{align}}
where $C$ is a constant dependent only on $n$.
\end{lemma}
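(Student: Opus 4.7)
The plan is to concatenate three ingredients: (i) the same triangle-inequality decomposition used in Lemma~\ref{lem:single_step_ltv}, (ii) Hoeffding's inequality applied componentwise to the sampled objects from Lemma~\ref{lem:unif_sample}, and (iii) a union bound over components and over the three error terms.

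First I would set weights $w_\sigma = 1/N$ (the ``averaging'' choice natural to the sampling scheme) and reuse the argument of Lemma~\ref{lem:single_step_ltv} to obtain, for the averaged control $\frac{1}{N}\sum_\sigma \uhat(\sigma,0)$, the deterministic bound
\[
\bigl\|\B\v - \tfrac{1}{N}\B\textstyle{\sum_\sigma}\uhat(\sigma,0)\bigr\| \leq \beta\bigl[E_{\x_0} + E_{\x_f} + E_{\A}\bigr],
\]
where $E_{\x_0} = \|\x_0 - \tfrac{1}{N}\sum_i \svec^{(i)}_{\x_0}\|$, $E_{\x_f} = \|\x_f - \tfrac{1}{N}\sum_i \svec^{(i)}_{\x_f}\|$, and $E_{\A}$ is the analogous error for $\A$ (using the row/column version of the sampling scheme). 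Consequently, by a union bound over the three error events,
\[
\mathbb{P}\bigl(\bigl\|\B\v - \tfrac{1}{N}\B\textstyle{\sum_\sigma}\uhat(\sigma,0)\bigr\| > \varepsilon\bigr) \leq \sum_{\ast \in \{\x_0,\x_f,\A\}} \mathbb{P}\!\left(E_\ast > \tfrac{\varepsilon}{3\beta}\right).
\]

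Next I would estimate each of these three probabilities using Hoeffding. The key observation from Lemma~\ref{lem:unif_sample} is that, for any fixed component $j$, the random variables $\{[\svec^{(i)}]_j\}_{i=1}^N$ are i.i.d.\ with mean $\w_j$ and (this is where the constant 9 comes from) are supported in either $\{0,\gamma(\w)\}$ or $\{-\gamma(\w),0\}$ depending on $\operatorname{sgn}(\w_j)$. Thus the range of each summand is exactly $\gamma(\w)$, not $2\gamma(\w)$, so Hoeffding yields
\[
\mathbb{P}\!\left(\bigl|\w_j - \tfrac{1}{N}\textstyle{\sum_i}[\svec^{(i)}]_j\bigr| > t\right) \leq 2\exp\!\left(-\tfrac{2Nt^2}{\gamma(\w)^2}\right).
\]
Taking $t = \varepsilon/(3\beta)$ immediately produces the exponent $-2N\varepsilon^2/(9\beta^2\gamma(\w)^2)$ appearing in the claim.

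Finally, I would pass from componentwise deviation to the norm deviation by a union bound over the $n$ (or at most $nm$) components, absorbing this combinatorial factor into a constant $C$ depending only on the dimension. Doing this for $\w = \x_0$, $\w = \x_f$, and (row/column-wise) $\w = \A$ and summing the three bounds gives the stated inequality. The main technical obstacle is the matrix case: the sampling scheme of Lemma~\ref{lem:unif_sample} is stated for vectors, so I would have to be careful to apply it to each row (or column) of $\A$, verify that $\gamma(\A) = \max_j \one^\top\abs(\A_{*j})$ simultaneously upper bounds the per-entry range across all rows, and then pool the resulting $n$ rowwise Hoeffding bounds into one event with the same exponent $-2N\varepsilon^2/(9\beta^2\gamma(\A)^2)$; this pooling is what forces $C$ to depend on $n$.
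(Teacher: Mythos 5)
Your proposal is correct and follows essentially the same route as the paper's proof: the triangle-inequality decomposition with the $3\beta$ split inherited from Lemma~\ref{lem:single_step_ltv}, componentwise Hoeffding bounds exploiting the Bernoulli-type samples of Lemma~\ref{lem:unif_sample} with range $\gamma(\cdot)$, and a dimension-dependent constant $C$. The only cosmetic difference is that you absorb the passage from componentwise to norm deviations via a union bound over components, whereas the paper uses norm-equivalence constants $\eta(n)$, $\mu(n)$ bounding $\|\cdot\|$ by $\|\cdot\|_\infty$; both yield the same statement.
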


\begin{proof}
First we characterize the convergence rates of the realizations of $\scriptxo, \scriptxf,$ and $\scriptA$ using the procedure in Lemma~\ref{lem:unif_sample}.
Note that 
\mo{
\begin{align}
	\label{eq:var_ineq} 
	\begin{array}{rl}
	& \makesmall{\| \x_0 \hspace{-.1cm}-\hspace{-.1cm} \frac{1}{N} \textstyle{\sum}_{s_0 = 1}^N \x_0^{(s_0)} \| \hspace{-.1cm} \leq \hspace{-.1cm} \eta(n) \max_{i} | [\x_0]_i \hspace{-.1cm}- \hspace{-.1cm}\frac{1}{N} \textstyle{\sum}_{s_0 = 1}^N [\x_0^{(s_0)} ]_i |},\\	
	& \makesmall{ \| \x_f \hspace{-.1cm}-\hspace{-.1cm} \frac{1}{N} \textstyle{\sum}_{s_f = 1}^N \x_f^{(s_f)} \| \hspace{-.1cm}\leq\hspace{-.1cm} \eta(n) \max_{i} | [\x_f]_i \hspace{-.1cm}-\hspace{-.1cm} \frac{1}{N} \textstyle{\sum}_{s_f = 1}^N [\x_f^{(s_f)} ]_i |},
	\end{array}
\end{align}}
where $\eta({n})$ is a constant dependent on $n$ such that $\|\w\| \leq \eta({n}) \|\w\|_\infty$, $\forall \w \,\in \real^{n}$. Moreover,
{\small
\begin{align}
\begin{array}{c}
	\label{eq:var_ineq3}  \makesmall{\| \A - \frac{1}{N} \sum_{s_A(1) = 1}^N \Ahat(s_A(1)) \| \leq}\\
	 \makesmall{\mu(n) \max\limits_{j\in \until{{n}}} \sum_{i\in \until{{n}}} | [\A]_{ij} \hspace{-.1cm}-\hspace{-.1cm} \frac{1}{N} \sum_{s_A(1) = 1}^N [\Ahat(s_A(1))]_{ij} |},
	\end{array}
\end{align}}
where $\mu({n})$ is a constant dependent on ${n}$ such that $\|\M\| \leq \mu({n}) \|\M\|_\infty$, $\forall \M \in \real^{{n} \times {n}}$.

Next, by using \mo{Hoeffding's inequality~\cite{WH:63}} and exploiting the Bernoulli nature of
the random variables, the right hand side of the inequalities in~\eqref{eq:var_ineq} and \eqref{eq:var_ineq3} can be bounded in probability for all $i \in \until{{n}}$ and for each $\varepsilon > 0$ as,
\mo{
\vspace{-.3cm}	
	\begin{align*}
	\begin{array}{rlll}
		& \makesmall{\mathbb{P}({ | [\x_0]_i \hspace{-.1cm}-\hspace{-.1cm} \frac{1}{N} \sum_{s_0 = 1}^N [\x_0^{(s_0)}]_i | \hspace{-.1cm}>\hspace{-.1cm} \varepsilon}) \leq 2\exp ( - \frac{2 N \varepsilon^2}{\gamma(\x_0)^2} )}\,,\\
		& \makesmall{\mathbb{P}({ | [\x_f]_i \hspace{-.1cm}-\hspace{-.1cm} \frac{1}{N} \sum_{s_f = 1}^N [\x_f^{(s_f)}]_i | \hspace{-.1cm}>\hspace{-.1cm} \varepsilon}) \hspace{-.1cm}\leq\hspace{-.1cm} 2\exp ( - \frac{2 N \varepsilon^2}{\gamma(\x_f)^2} )}\,,\\
		& \makesmall{\mathbb{P}({ | \A_{ij} \hspace{-.1cm}-\hspace{-.1cm} \frac{1}{N} \sum_{s_A(1) = 1}^N [\Ahat(s_A(1))]_{ij} | \hspace{-.1cm}>\hspace{-.1cm} \varepsilon})}
		\makesmall{ \leq \hspace{-.1cm}2\exp ( - \frac{2 N \varepsilon^2}{\gamma(\A)^2} )}
		\end{array}
	\end{align*}}
 \mo{Applying triangle inequality returns the results.}
\end{proof}

We can use the proof technique here to also characterize the rate of convergence in Lemma~\ref{lem:sparse_control_necessary} considering stochastic ensemble approximations produced as in Lemma~\ref{lem:unif_sample}.

\begin{lemma} \label{lem:final_state_variance}
\thmtitle{Variance of state trajectory}
Let $\mathcal{X}_0$, $\mathcal{A}$ and $\mathcal{A}$ be stochastic ensemble approximations of $\x_0$, $\A$ and $\B$ respectively, generated using Lemma~\ref{lem:unif_sample}. Then $\x(K)$ in Lemma \ref{lem:sparse_control_necessary} satisfies, for each $\varepsilon > 0$,

\vspace{-.4cm}
\mo{
{\small
\begin{align*}
	 \makesmall{\mathbb{P}({\| \x(K) - \frac{1}{N} \textstyle{\sum}_{\sigma = 1}^N\xhat(\sigma,K) \| > \varepsilon}) \leq 
	 2C(e^ {c_1} + e^{c_2}
	  + e^{c_3}),}
\end{align*}}
where $c_1=- \frac{2 N \varepsilon^2}{9\left\|\u \right\|^2\gamma(\B)^2}$, $c_2= - \frac{2 N \varepsilon^2}{9\left\|\A \right\|^2\gamma(x_0)^2}$, $c_3=- \frac{2 N \varepsilon^2}{9\left\|\x_0 \right\|^2\gamma(\A)^2}$ and $C$ only depends on $n$.}
\end{lemma}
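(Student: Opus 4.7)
The plan is to mirror the variance analysis in Lemma \ref{lem:variance}, but applied to the forward propagation of the state under \eqref{eq:sparse_ltv_dyn} rather than to a one-step control computation. First, I would treat $K=1$ in detail and then induct. At $K=1$, using the choice $\uhat(\sigma,0) = \u(0)$ from the proof of Lemma \ref{lem:sparse_control_necessary},
\[
\x(1) - \tfrac{1}{N}\textstyle{\sum}_{\sigma=1}^N \xhat(\sigma,1) = \A\x_0 + \B\u(0) - \tfrac{1}{N}\textstyle{\sum}_{\sigma=1}^N \bigl[\Ahat(s_A(\sigma,1))\,\x_0^{(s_0(\sigma))} + \Bhat(s_B(\sigma,1))\,\u(0)\bigr].
\]
Following the proof of Lemma \ref{lem:sparse_control_necessary}, I would insert two telescoping intermediate quantities, exploit independence of $\scriptxo,\scriptA,\scriptB$, and apply the triangle inequality so that the error splits into three pieces controlled by $\|\A\|\cdot\|\x_0-\tfrac{1}{N}\sum \x_0^{(s_0)}\|$, $\|\x_0\|\cdot\|\A-\tfrac{1}{N}\sum \Ahat(s_A)\|$, and $\|\u(0)\|\cdot\|\B-\tfrac{1}{N}\sum \Bhat(s_B)\|$.

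Next, I would convert each of these norm differences into componentwise $\infty$-norms using $\|\cdot\|\leq\eta(n)\|\cdot\|_\infty$ for vectors and $\|\cdot\|\leq\mu(n)\|\cdot\|_\infty$ for matrices, exactly as in the proof of Lemma \ref{lem:variance}. Because the sampling scheme of Lemma \ref{lem:unif_sample} produces Bernoulli-type random variables taking values in $\{0,\pm\gamma(\x_0)\}$, $\{0,\pm\gamma(\A)\}$ and $\{0,\pm\gamma(\B)\}$, Hoeffding's inequality applied entrywise yields tails of the form $2\exp(-2N\tilde{\varepsilon}^2/\gamma(\cdot)^2)$ for each component's sample-average deviation.

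I would then combine these with a union bound, splitting $\varepsilon$ equally among the three error sources; this is what produces the factor $9$ in the denominators of $c_1,c_2,c_3$. The maxima over at most $n$ vector entries and $n^2$ matrix entries, together with $\eta(n)$ and $\mu(n)$, get absorbed into the $n$-dependent constant $C$. Matching up the amplifying factors in the three pieces with $\|\u\|^2$, $\|\A\|^2$, $\|\x_0\|^2$ then recovers the exact form of $c_1,c_2,c_3$ in the statement, completing the $K=1$ case.

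Finally, I would handle general $K$ by induction, as at the end of the proof of Lemma \ref{lem:sparse_control_necessary}: at step $k{+}1$, $\{\xhat(s(k))\}$ already plays the role of a stochastic ensemble approximation of $\x(k)$, and applying the $K=1$ analysis one more step accumulates another triple of Hoeffding tails. The main obstacle I expect is the bookkeeping in this induction — per-step errors are amplified by factors of $\|\A\|$ and pick up additional $\B,\u(k)$ contributions, so ensuring that the $K$-dependent growth is absorbed cleanly into the $n$-only constant $C$, as the statement claims, requires interpreting $\|\A\|$, $\|\x_0\|$, $\|\u\|$ in $c_1,c_2,c_3$ as uniform bounds over the horizon and carefully tracking how the three Hoeffding exponents compose under repeated application of the triangle inequality.
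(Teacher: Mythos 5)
Your proposal matches the paper's proof, which is literally a one-line statement that "the proof follows the same arguments as in the proofs of Lemmas~\ref{lem:sparse_control_necessary} and~\ref{lem:variance}" --- i.e., exactly the combination of the state-propagation decomposition and the componentwise Hoeffding bounds you describe. Your write-up is in fact more detailed than the paper's, and your closing caveat about absorbing the $K$-step bookkeeping into constants interpreted uniformly over the horizon is a fair reading of what the terse statement leaves implicit.
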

\begin{proof}
The proof follows the same arguments as in the proofs of Lemmas~\ref{lem:sparse_control_necessary} and~\ref{lem:variance}.
\end{proof}
\vspace{-.2cm}
\subsection{Sample Averaging using Least Squares Error Minimization}

In the previous section, we assigned uniform weights to each realization in order to produce the stochastic ensemble approximation. For each sample run of the process, we can make this averaging method better by introducing a least square error minimization with the samples produced from the vector sampling scheme in Lemma~\ref{lem:unif_sample}. Since the weights associated with the optimization problem are restricted to be in the simplex $\smp^n$ (as per Definition~\ref{def:sparse_approx}), we approach this problem in two different ways. We describe each of them next and compare their performances later.

\paragraph{Accumulated Least Squares Error (ALSE) Minimization}
In this approach, we take into account all the realizations of the stochastic ensemble approximations $\scriptxo, \scriptxf,$ and $\scriptA$ all at once. Then we compute the associated averaging weights from least squares error minimization and finally use these weights for the computed control. Next, we provide a closed form solution to the least squares error minimization problem by exploiting the entries of the vectors in Lemma~\ref{lem:unif_sample}. 

The samples produced in Lemma~\ref{lem:unif_sample} have a very particular structure to them. In fact, they have a non-zero entry in one component and have \emph{zeros} everywhere else (\emph{i.e.} their support is a singleton set). Hence, the whole problem boils down to adjusting the weights individually for the components in order to minimize the norm distance of the weighted sum of the samples and the vector. We provide this in the next result.

\begin{lemma}\label{lem:lsq_sol}
\thmtitle{Least square error problem solution for singleton support samples}
Consider a vector $\w \in \real^{{n}} \setminus \{\zero\}$. Let $\wb \ldef (1/\gamma(\w))\abs(\w)$ and suppose $\{\svec^{(1)},\cdots, \svec^{(N)}\}$ is a set of $N$ sample vectors produced using the sampling scheme in Lemma~\ref{lem:unif_sample}. Let $\S$ be the matrix whose $i\tth$ column is $\svec^{(i)}$. Then a solution to

\vspace{-.4cm}
{\small
\begin{align}
	\min_{\y \in \smp^N} \,\, \| \S\y - \w \|_2^2\,,
	\label{eq:lsq}
\end{align}}
is given by $\y^*$ with $[\y^*]_j = \alpha_j/n_j$, where $\alpha_j = |[\w]_k|$ with $k$ such that $[\svec^{(j)}]_k \neq 0$ and $n_j = |\{i \in \until{N} \,|\, [\svec^{(i)}]_k \neq 0\}|$.
\end{lemma}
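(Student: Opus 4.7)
The plan is to exploit the singleton-support structure of the samples produced by Lemma~\ref{lem:unif_sample} in order to reduce the $N$-dimensional simplex-constrained least squares to a decoupled $n$-dimensional problem whose minimum is zero and is attained in closed form, after which the construction lifts back to an explicit $\y^{*}\in\smp^{N}$.

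First, I would introduce the partition of indices induced by the support of the sampled vectors. Each $\svec^{(j)}$ has exactly one nonzero entry, say at position $s_j\in\{1,\ldots,n\}$, with value $\sgn([\w]_{s_j})\gamma(\w)$. Setting $I_k\ldef\{j\in\{1,\ldots,N\}:s_j=k\}$ yields a disjoint partition of $\{1,\ldots,N\}$, and the quantity $n_j$ in the claim is precisely $|I_{s_j}|$. Computing componentwise,
\[
[\S\y]_k \;=\; \sgn([\w]_k)\,\gamma(\w)\sum_{j\in I_k} [\y]_j .
\]

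Second, I would reparameterize. Define $p_k\ldef\sum_{j\in I_k}[\y]_j$ for each $k$. Because $\{I_k\}$ partitions $\{1,\ldots,N\}$, the constraint $\y\in\smp^N$ maps \emph{exactly} onto the constraint $\mathbf{p}=(p_1,\ldots,p_n)^{\top}\in\smp^n$, with the reverse map surjective (given any feasible $\mathbf{p}$, one can pick any nonnegative weights within each fiber $I_k$ summing to $p_k$). Substituting into the objective, and using that $\sgn([\w]_k)^2=1$ for $[\w]_k\neq 0$ (while the components with $[\w]_k=0$ contribute zero, as $I_k=\varnothing$ there almost surely), we obtain
\[
\|\S\y-\w\|_2^2 \;=\; \textstyle\sum_{k=1}^{n}\bigl(\gamma(\w)\,p_k-|[\w]_k|\bigr)^{2}.
\]

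Third, I would solve the reduced problem. The candidate $p_k^{*}=|[\w]_k|/\gamma(\w)=[\wb]_k$ is feasible in $\smp^n$ since $p_k^*\ge 0$ and $\sum_k p_k^*=\gamma(\w)/\gamma(\w)=1$, and it makes each squared term vanish, so the reduced minimum equals $0$ and $\mathbf{p}^{*}$ is globally optimal. Lifting back, any $\y\in\smp^N$ with $\sum_{j\in I_k}[\y]_j=p_k^{*}$ for every $k$ solves~\eqref{eq:lsq}; choosing the uniform distribution over each fiber, $[\y^{*}]_j = p_{s_j}^{*}/n_j$, gives the explicit formula in the statement (up to the $1/\gamma(\w)$ normalization needed to place $\y^{*}$ in $\smp^N$).

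The main obstacle is primarily bookkeeping: keeping the sign factors $\sgn([\w]_k)$ absorbed into $\S$ so that the reduced objective is a sum of non-negative squared errors, and carrying the $1/\gamma(\w)$ factor so the weights $p_k^{*}$ actually land in the simplex. Once this reparameterization is in place, optimality is immediate because the reduced objective vanishes, so there is nothing to minimize beyond exhibiting one feasible lift, and uniform splitting over each fiber $I_k$ is the simplest such choice.
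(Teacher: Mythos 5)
Your reduction---grouping the samples by their singleton support position and rewriting the squared error coordinate-wise---is exactly the decomposition the paper uses (its sets $\mathcal{D}_1$, $\mathcal{D}_2$ and the index sets $\smpind(i)$ are your fibers $I_k$). The gap is in your second step: the claim that $\y \mapsto (p_1,\dots,p_n)$ with $p_k = \sum_{j \in I_k} [\y]_j$ maps $\smp^N$ \emph{onto} $\smp^n$ fails whenever some coordinate $k$ with $[\w]_k \neq 0$ is hit by no sample, i.e.\ $I_k = \varnothing$, which happens with positive probability for every finite $N$. On such coordinates $p_k$ is forced to equal $0$, so your candidate $p_k^* = |[\w]_k|/\gamma(\w)$ is infeasible, the reduced minimum is \emph{not} zero, and your entire optimality argument (``the objective vanishes, hence optimal'') collapses; indeed, in that regime the unassigned mass $1 - \sum_{k \in \mathcal{D}_1} |[\w]_k|/\gamma(\w) > 0$ must be redistributed over the sampled coordinates to meet the simplex constraint, so the minimizer is no longer the uniform-splitting weights you exhibit. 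The paper's proof does not assert a zero minimum: it keeps the residual $\sum_{i \in \mathcal{D}_2} \w_i^2$ coming from the unsampled coordinates, evaluates $J(\y^*)$ to be that residual, and reserves the zero-error conclusion for the case $\mathcal{D}_2 = \varnothing$ in the remark following the lemma. As written, your argument establishes the statement only under the additional assumption that every nonzero entry of $\w$ appears in at least one of the $N$ samples.

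Your side observation about the scaling is a fair catch: with $\alpha_j = |[\w]_k|$ as stated, $\S\y^*$ reproduces $\gamma(\w)[\w]_k$ rather than $[\w]_k$ and $\y^*$ need not lie in $\smp^N$; the paper's own evaluation $J(\y^*) = \sum_{i \in \mathcal{D}_2} \w_i^2$ implicitly uses the normalized weights $\alpha_j = [\wb]_k = |[\w]_k|/\gamma(\w)$, which is what your construction produces. But fixing that normalization does not repair the surjectivity/coverage issue above, which is where your proof genuinely diverges from what must be shown.
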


\nm{
\begin{proof}
Note that the problem is convex in $\y$ and that the set $\until{{n}}$ can be split into two disjoint sets $\mathcal{D}_1 \ldef \{p \in \until{{n}} \,|\, \exists \, q \,\mathrm{s.t.}\, [\svec^{(q)}]_p \neq 0\}$ and $\mathcal{D}_2 \ldef \until{{n}} \setminus \mathcal{D}_1$. Now, since the support of $\svec^{(i)}$ is a singleton, the problem can be rewritten as

\vspace{-.5cm}
{\small
\begin{align*}
	\min_{\y \in \smp^N} \,\,  \textstyle{\sum}_{i \in \mathcal{D}_1} ( \textstyle{\sum}_{j \in \smpind(s)} [\svec^{(j)}]_i\y_{j}  - \w_i)^2 + \textstyle{\sum}_{i \in \mathcal{D}_2} \w_i^2 \rdef J(\y) \,,
\end{align*}}
where $\smpind(i) \ldef \{p \in \until{N} \,|\, [\svec^{(p)}]_i \neq 0\}$. It is easy to see that for the solution $\y^*$ in the claim, the value of the cost function is $J(\y^*) = \sum_{i \in \mathcal{D}_2} \w_i^2$. As this is the minimum value of $J(\cdot)$, the proof is complete.
\end{proof}

It is worth noting that if the set $\mathcal{D}_2$ in the proof of Lemma \ref{lem:lsq_sol} becomes empty, then $J(\y^*) = 0$. This means that if the samples are rich enough so that all the non-zero entries of the original vector appear at least once in the set $\{\svec^{(1)},\cdots\hspace{-.1cm}, \svec^{(N)}\}$, then the weights assigned using the least squares problem recreates the original vector perfectly. This does not translate to the case for matrices as then the associated weights to the matrix samples are coupled using multiple non-zero entries across the matrix, as illustrated in Section~\ref{sec:sims}.
}

\paragraph{Streaming Least Squares Error (SLSE) Minimization} 
In this case, the controls are attained from sequentially changing the realizations of the vectors and matrices. For the uniform weights case, the solution can be made better by adding more samples. For the ALSE minimization case, the weights have to be determined beforehand and then the controls computed, since we do not assign the same weights to each sample. Moreover, as the number of samples increases, the weights assigned to the previous samples need to be recomputed in order to satisfy~\eqref{eq:lsq}. 

However for the SLSE minimization case, the weights to the controls can be updated sequentially in order to construct a sub-optimal solution to~\eqref{eq:lsq}. Essentially, suppose $\E_1$ and $\E_2$ are estimates of $\E^*$, with $\E_1, \E_2$ and $\E^*$ being comparable entities from $\{\Mcal{X}_0, \Mcal{X}_f, \Mcal{A}\}$, \emph{i.e.} they are arbitrary elements in the same sequence. Then we solve for
{\small
\begin{align}
	\min_{(w_1,w_2) \in \smp^2} & \|w_1 \E_1 + w_2 \E_2 - \E^*\|_2^2\,.
	\label{eq:streaming_opt}
\end{align}}
The computed controls are updated using the same weights. We explain this in Algorithm \ref{algo:streaming}.
{\small
\begin{algorithm}{Sequential control computation with least square error estimation}
\begin{algorithmic}[1]
\REQUIRE $\mathcal{X}_0$, $\mathcal{X}_f$, $\mathcal{A}$, $\x_0$, $\x_f$, $\A$,$\B$, \texttt{errorBound}
\ENSURE control $\v(k)$
\STATE \texttt{error} $\gets \infty$; \quad $\xhat_0 \gets \xhat(\sigma,0) \in \Mcal{X}_0$;
\STATE  $\xhat_f \gets \x_f(s_f(\sigma)) \in \Mcal{X}_f$; \quad $\Ahat \gets \Ahat(s_A(\sigma,1)) \in \Mcal{A}$; 
\STATE $\v \gets \widehat{\v}$ such that $\xhat_f = \Ahat\xhat_0 + \B \widehat{\v}$
\WHILE{\texttt{error} $>$ \texttt{errorBound}}
\STATE Randomly choose $i \in \{1,2,3\}$
\IF{$i = 1$}
	\STATE update $\xhat_0$
	\STATE $(w_1^*,w_2^*) \gets$ solution of~\eqref{eq:streaming_opt} with $\E_1 = $ old $\xhat_0$, $\E_2 = $ new $\xhat_0$ and $\E^* = \x_0$. 
\ELSIF{$i = 2$}
	\STATE update $\xhat_f$
	\STATE $(w_1^*,w_2^*) \gets$ solution of~\eqref{eq:streaming_opt} with $\E_1 = $ old $\xhat_f$, $\E_2 = $ new $\xhat_f$ and $\E^* = \x_f$.
\ELSIF{$i = 3$}
	\STATE update $\Ahat$
	\STATE $(w_1^*,w_2^*) \gets$ solution of~\eqref{eq:streaming_opt} with $\E_1 = $ old $\Ahat$, $\E_2 = $ new $\Ahat$ and $\E^* = \A$.
\ENDIF
\STATE $\v' \gets \widehat{\v}$ such that $\xhat_f = \Ahat\xhat_0 + \B \widehat{\v}$
\STATE \texttt{error} $\gets$ $\|(w_1^*-1)\v + w_2^*\v'\|$
\STATE $\v \gets w_1^*\v + w_2^*\v'$
\ENDWHILE
\end{algorithmic}
\label{algo:streaming}
\end{algorithm}}

We conclude this section by comparing the averaging methods provided earlier.
\vspace{-.2cm}
\subsection{Comparison between the Averaging Methods}
Since the weights obtained in the least squares error minimization are obtained from an optimization problem, we can compare their convergence rate with the uniform averaging scheme. 

\begin{lemma}\label{lem:compare}
\thmtitle{Comparison between averaging methods}
Suppose $\{\E_i\}_{i=1}^N$ are estimates of $\E^*$, with $\{\E_i\}_{i=1}^N$ and $\E^*$ being comparable entities from $\{\Mcal{X}_0, \Mcal{X}_f, \Mcal{A}\}$. Let
$\F_{\mathrm{uniform}} \ldef (1/N)\sum_{i=1}^N \E_i$. Let $\F_{\mathrm{ALSE}} \ldef \sum_{i=1}^N w^*_i\E_i$ with $[w^*_1,\cdots,w^*_N]$ being a solution of

\vspace{-.4cm}
{\small
\begin{align}
	 \min_{[w_1,\cdots,w_N] \in \smp^N} \|\E^* - \textstyle{\sum}_{i=1}^N w_i\E_i \|_2^2\,.
	 \label{eq:lsq_general}
\end{align}}
Finally, let $\F_{\mathrm{SLSE}} \ldef \F(N)$ where $\F(N)$ is a solution to the difference equation $\F(n) = v^*_1 \F(n-1) + v^*_2\F(n-2)$ starting from $\F(1) = \E_1, \F(2) = \E_2$ and where $v^*_i$ comes from Algorithm \ref{algo:streaming}. Then,
{\small
\begin{align}
	\|\E^* \hspace{-.1cm}- \hspace{-.1cm}\F_{\mathrm{ALSE}}\| \hspace{-.1cm}\leq\hspace{-.1cm} \|\E^*\hspace{-.1cm} - \hspace{-.1cm}\F_{\mathrm{SLSE}}\| \hspace{-.1cm}\leq\hspace{-.1cm} \|\E^* \hspace{-.1cm}-\hspace{-.1cm} \F_{\mathrm{uniform}}\|\,. 
	\label{eq:estimate_error_order}
\end{align}}
\end{lemma}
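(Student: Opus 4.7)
\emph{Proof plan.} I would prove the two inequalities in \eqref{eq:estimate_error_order} separately, exploiting in each case the optimality of a least-squares subproblem over an appropriate feasible set contained in $\smp^N$.

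For the left inequality $\|\E^* - \F_{\mathrm{ALSE}}\| \leq \|\E^* - \F_{\mathrm{SLSE}}\|$, the key step is to recognize that $\F_{\mathrm{SLSE}}$ is itself a convex combination of $\{\E_i\}_{i=1}^N$. This can be shown by induction on the index of the SLSE recurrence: the base case is trivial, and if the iterate at step $n-1$ is already a convex combination $\sum_{i=1}^{n-1}\beta_i\E_i$ with $\beta\in\smp^{n-1}$, then the update that takes a further $(v_1^*, v_2^*) \in \smp^2$-convex combination of this iterate with the newly incorporated sample $\E_n$ produces coefficients $\{v_1^* \beta_i\}_{i=1}^{n-1}\cup\{v_2^*\}$ that again lie in $\smp^n$. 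Hence $\F_{\mathrm{SLSE}}$ is feasible for the ALSE problem \eqref{eq:lsq_general}, and optimality of $\F_{\mathrm{ALSE}}$ over $\smp^N$ gives the claimed bound.

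For the right inequality $\|\E^* - \F_{\mathrm{SLSE}}\| \leq \|\E^* - \F_{\mathrm{uniform}}\|$, I would proceed by induction on $N$. The base case $N = 2$ is immediate because the uniform weights $(1/2, 1/2)$ are a feasible point of the SLSE minimization \eqref{eq:streaming_opt}, so the SLSE optimum cannot be worse. For the inductive step, let $\F(n)$ denote the SLSE iterate and $\bar{\F}(n) \ldef (1/n)\sum_{i=1}^n \E_i$ the uniform iterate. Writing $\bar{\F}(n) = \frac{n-1}{n}\bar{\F}(n-1) + \frac{1}{n}\E_n$ displays the uniform estimate as the particular weight choice $(\frac{n-1}{n},\frac{1}{n})\in\smp^2$ applied to $\bar{\F}(n-1)$ and $\E_n$. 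By optimality of $(v_1^*,v_2^*)$ at step $n$, we have $\|\E^* - \F(n)\| \leq \|\E^* - \tfrac{n-1}{n}\F(n-1) - \tfrac{1}{n}\E_n\|$, and one then uses the inductive hypothesis $\|\E^*-\F(n-1)\|\leq\|\E^*-\bar{\F}(n-1)\|$ to close the chain.

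The main obstacle is this last closing step: deducing $\|\E^* - \tfrac{n-1}{n}\F(n-1) - \tfrac{1}{n}\E_n\| \leq \|\E^* - \tfrac{n-1}{n}\bar{\F}(n-1) - \tfrac{1}{n}\E_n\|$ from the inductive hypothesis is not immediate, since expanding in squared $2$-norm introduces inner-product terms $\langle \E^*-\F(n-1),\E^*-\E_n\rangle$ whose sign need not agree with the analogous term involving $\bar{\F}(n-1)$. A route around this is to leverage that $(v_1^*,v_2^*)$ re-optimizes at each step, so the inequality $\|\E^*-\F(n)\|\leq\|\E^*-w_1\F(n-1)-w_2\E_n\|$ holds for \emph{every} $(w_1,w_2)\in\smp^2$: one may therefore pick the weights that best exploit the improved starting point $\F(n-1)$ and then apply convexity of $\|\cdot\|_2^2$ together with the inductive hypothesis. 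Alternatively, taking expectations over the sample $\E_n$ eliminates the cross term and aligns the argument with the probabilistic setting in which the lemma is invoked.
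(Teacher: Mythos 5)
Your treatment of the left inequality is correct and is essentially the paper's own argument: unrolling the streaming recursion shows $\F_{\mathrm{SLSE}}$ is a convex combination $\sum_{i=1}^N\beta_i\E_i$ with $\beta\in\smp^N$, hence a feasible point of \eqref{eq:lsq_general}, and optimality of the ALSE weights gives $\|\E^*-\F_{\mathrm{ALSE}}\|\leq\|\E^*-\F_{\mathrm{SLSE}}\|$. Your induction merely makes explicit the feasibility claim that the paper states in one line, so this half is fine.

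The genuine gap is in the right inequality, and it is exactly the step you flagged. Per-step optimality of $(v_1^*,v_2^*)$ in \eqref{eq:streaming_opt} only yields $\|\E^*-\F(n)\|\leq\|\E^*-\tfrac{n-1}{n}\F(n-1)-\tfrac{1}{n}\E_n\|$, and the inductive hypothesis $\|\E^*-\F(n-1)\|\leq\|\E^*-\bar{\F}(n-1)\|$ (with $\bar{\F}(n)\ldef(1/n)\sum_{i\leq n}\E_i$) does not pass through the mixture: the cross terms $\langle\E^*-\F(n-1),\E^*-\E_n\rangle$ can have the wrong sign, so replacing $\bar{\F}(n-1)$ by the pointwise-better $\F(n-1)$ may move the mixture \emph{away} from $\E^*$. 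None of your proposed repairs closes this: re-optimizing $(w_1,w_2)$ faces the same cross terms, convexity only gives the bound $\tfrac{n-1}{n}\|\E^*-\F(n-1)\|+\tfrac{1}{n}\|\E^*-\E_n\|$, which can exceed $\|\E^*-\bar{\F}(n)\|$ precisely when the uniform average benefits from cancellation, and taking expectations changes the claim, which is a deterministic norm inequality. In fact, for unstructured estimates the ordering can fail outright: with $\E^*=\zero$ and $\E_1=(1,1)$, $\E_2=(1,-1)$, $\E_3=(-1,1)$, $\E_4=(-1,-1)$ one gets $\F_{\mathrm{uniform}}=\E^*$, while the greedy two-point iterates end at $\F(4)\neq\E^*$ because $\E^*$ never lies on the segment between the current iterate and the incoming sample; so any valid proof must invoke the specific structure of the samples (e.g., the singleton-support realizations of Lemma~\ref{lem:unif_sample}), which neither you nor the paper uses here. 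Be aware that the paper's own justification of this half \textemdash\ that $(1/N)\one$ is feasible ``for the optimization problem in Algorithm~\ref{algo:streaming}'' \textemdash\ amounts to the observation in your base case; since \eqref{eq:streaming_opt} optimizes only over $\smp^2$ at each step, it does not bridge the gap you identified either.
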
 

\begin{proof}
Note that $[v^*_1,\cdots,v^*_N]$ is a feasible solution to~\eqref{eq:lsq_general}. \mo{This results in the first inequality. The second inequality holds since $(1/N)\one$ is a feasible solution to the optimization problem in Algorithm \ref{algo:streaming}.}
\end{proof}
\section{Simulations and Analysis}\label{sec:sims}
In this section, we provide simulation examples to validate our results in two scenarios. In the first case, we verify Lemma~\ref{lem:compare} in a power systems example. In the second case, we study the error in the computed state for a system with large dimensions. \nm{We use CVX MATLAB toolbox~\cite{MG-SB:14} to solve the optimization problems.}
\subsection{Error in Computed Control} \label{sec:power_example} 
\vspace{-.4cm}
Here, we take an example system \mo{in the form of \eqref{eq:dlti}} from~\cite{FP-FD-FB:12a} using MATLAB's power toolbox. The system dimensions are $n = 10$ and $m = 5$ (with $\A \in \real^{n \times n}$, $\B \in \real^{n \times m}$) and we choose the initial condition randomly. 

\begin{figure}[h!]
	\begin{center}
	\begin{tabular}{cc}
		\includegraphics[trim = 1.3in 3.1in 1.5in 3.3in, clip, scale=0.28]{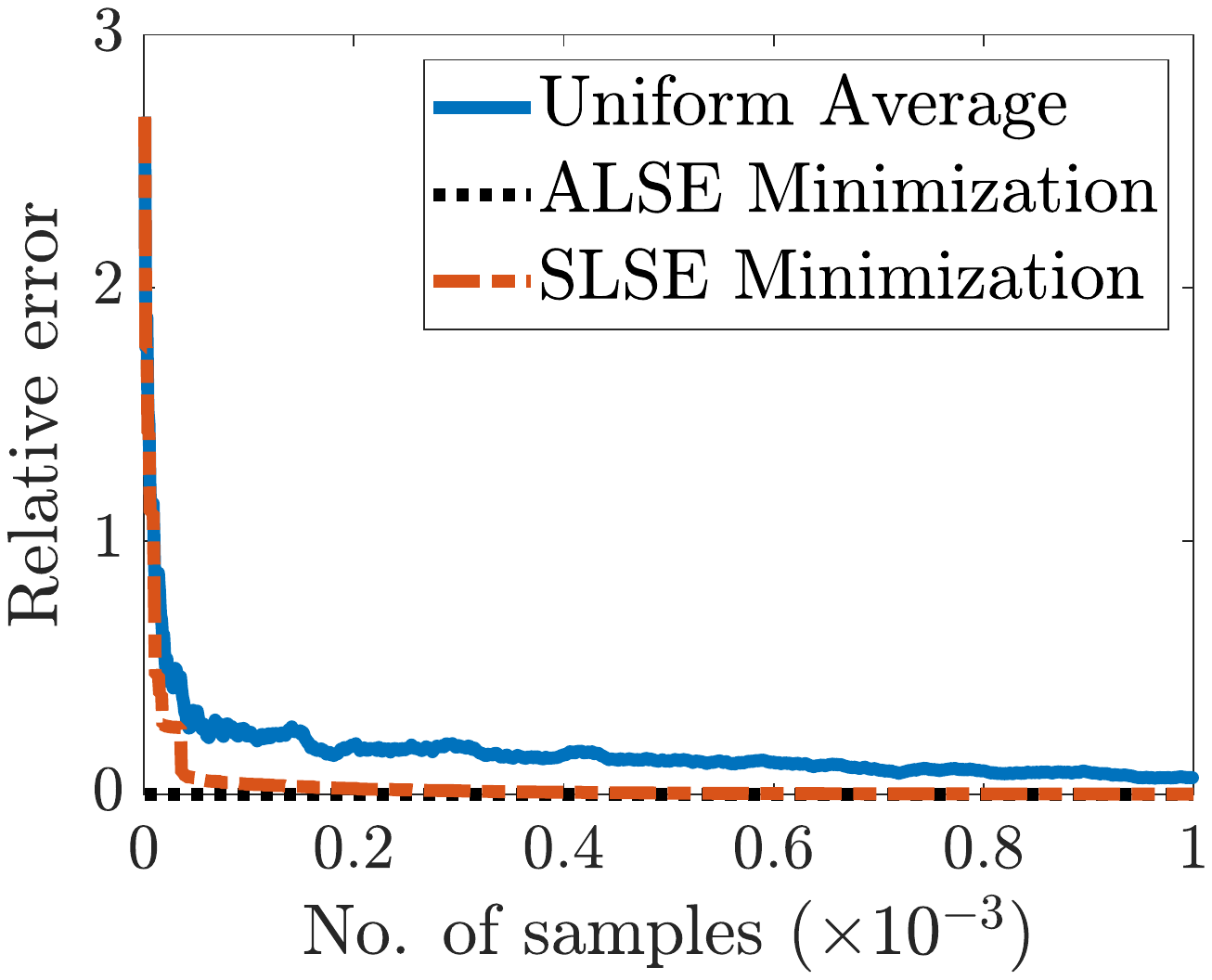} & \includegraphics[trim = 1.3in 3.1in 1.5in 3.3in, clip, scale=0.28]{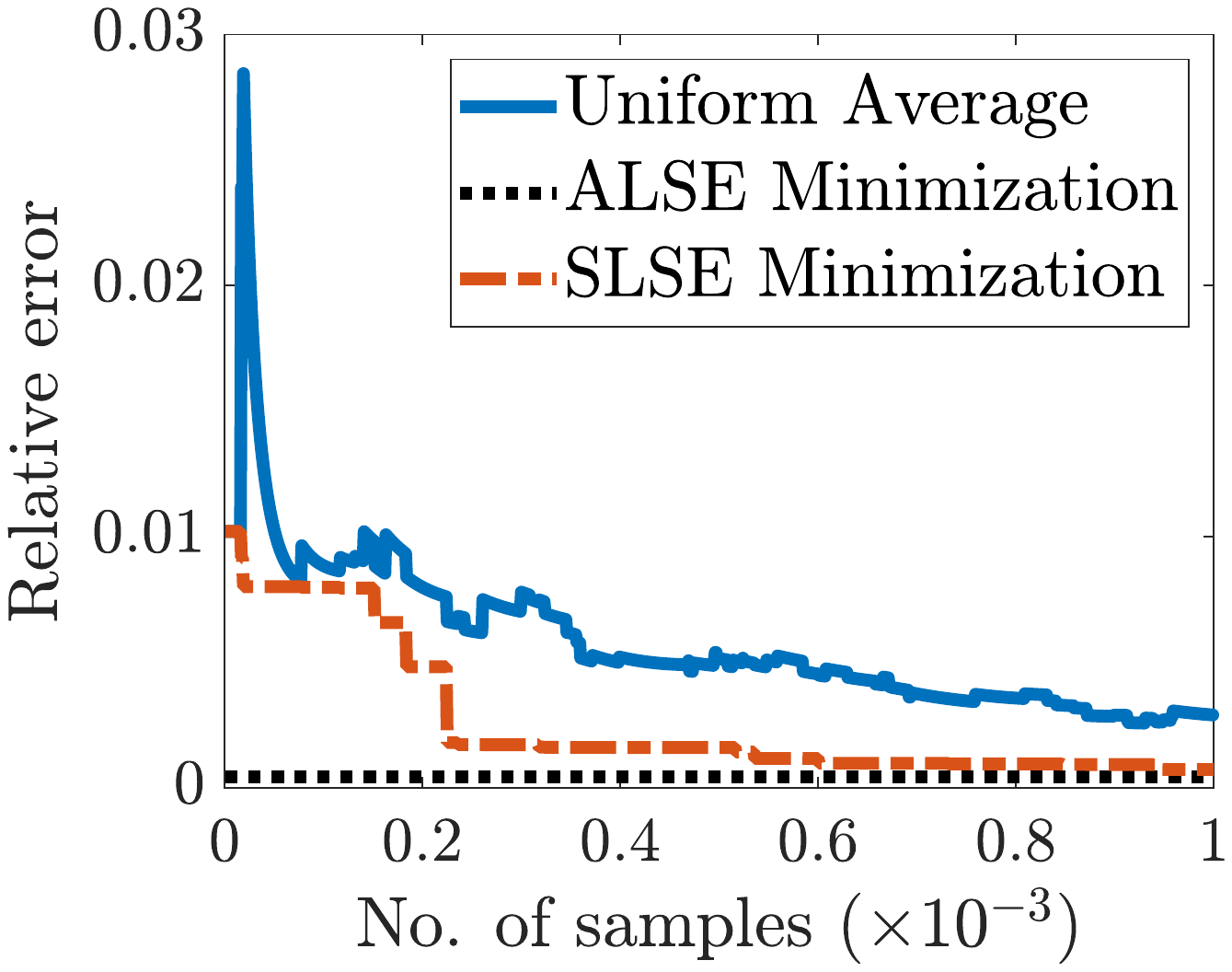}\\
		\hypertarget{subfig:error_vector}{(a)} & \hypertarget{subfig:error_matrix}{(b)}
	\end{tabular}
	\vspace{-.4cm}
		\caption{{\small Relative error in estimation versus number of samples for example in Section \ref{sec:power_example}. (a) Relative error in estimated initial condition $\x_0$. (b) Relative error in estimated system matrix $\A$.}}
		\label{fig:estimation_error}
	\end{center}
\end{figure}
\vspace{-.5cm}
\nm{First, we verify the ordering in estimation errors for the different averaging methods in Figure~\ref{fig:estimation_error}. This is in accordance with~\eqref{eq:estimate_error_order}. Moreover, note that in Figure~\ref{fig:estimation_error}\hyperlink{subfig:error_vector}{(a)}, it is possible to obtain \emph{zero} relative error for the initial condition estimation using ALSE Minimization and SLSE Minimization as per the discussion following Lemma~\ref{lem:lsq_sol}.} 
\nm{Next, we deal with the errors in the computed control and the computed state.}The goal is to drive the state close to the origin in one time step. The actual control that does this was computed by taking $\u_\mathrm{actual} = \B^{\dagger}(-\A\x_0)$, where $\B^{\dagger}$ denotes the Moore-Penrose pseudoinverse. Then the three different controls $\u_\mathrm{computed}$ were computed using the methods listed in this paper (\emph{i.e.} uniform average, ALSE minimization and SLSE minimization). The comparison of the relative error in computed control with respect to the number of samples is shown in Figure \ref{fig:power_sys}\hyperlink{subfig:power_control}{(a)}. Note that the $\u_\mathrm{computed}$ for the ALSE minimization case was computed after considering all the samples. We also computed the relative error in the computed state with respect to the number of samples and present the results in in Figure \ref{fig:power_sys}\hyperlink{subfig:power_state}{(b)}. \nm{It is worth mentioning that the error in the computed control and the computed states do not always follow the same ordering as the estimation errors of the initial condition and the system matrix}. 
%

\begin{figure}[h!]
	\begin{center}
	\begin{tabular}{cc}
		\includegraphics[trim = 1.3in 3.1in 1.5in 3.3in, clip, scale=0.28]{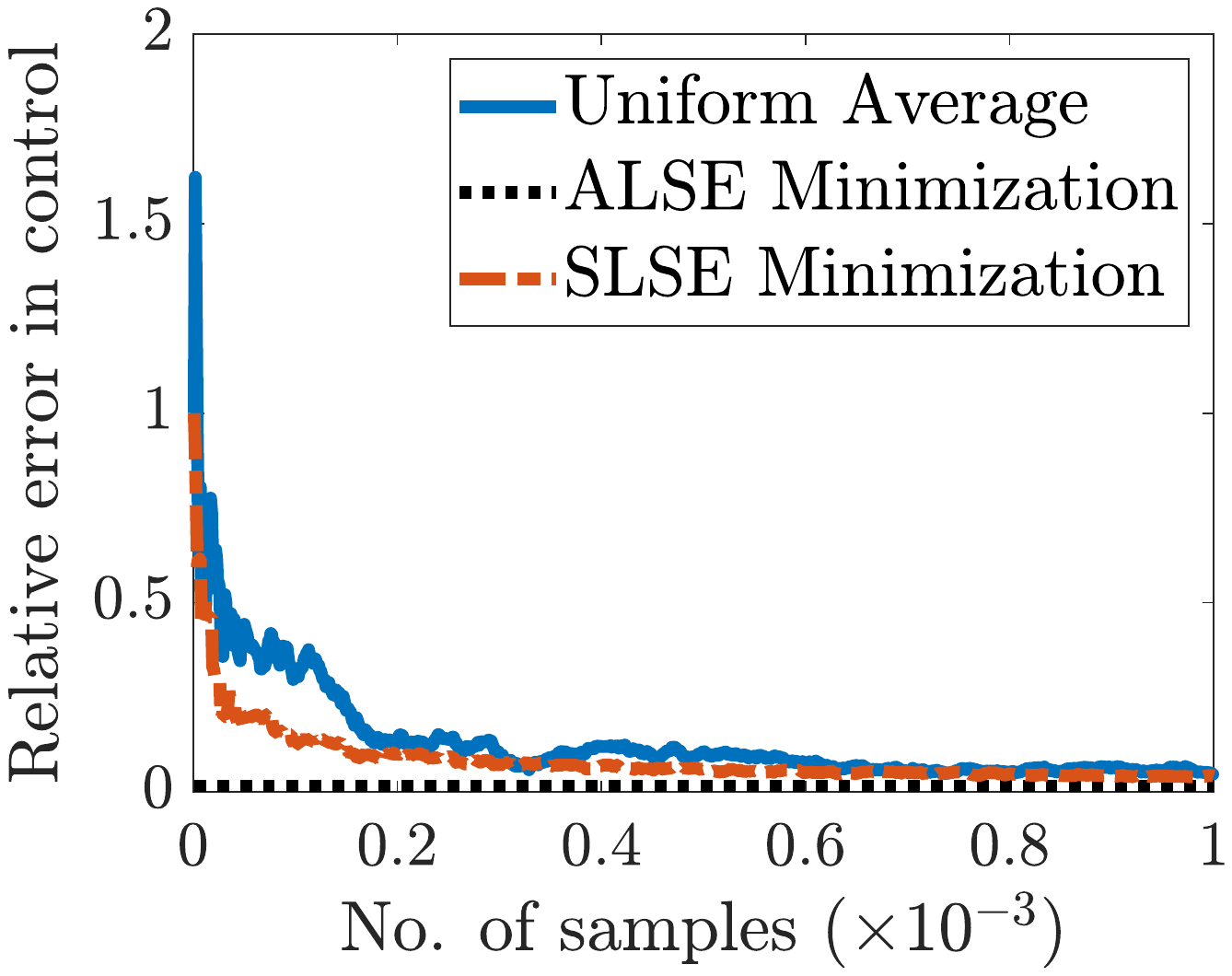} & \includegraphics[trim = 1.3in 3.1in 1.5in 3.3in, clip, scale=0.28]{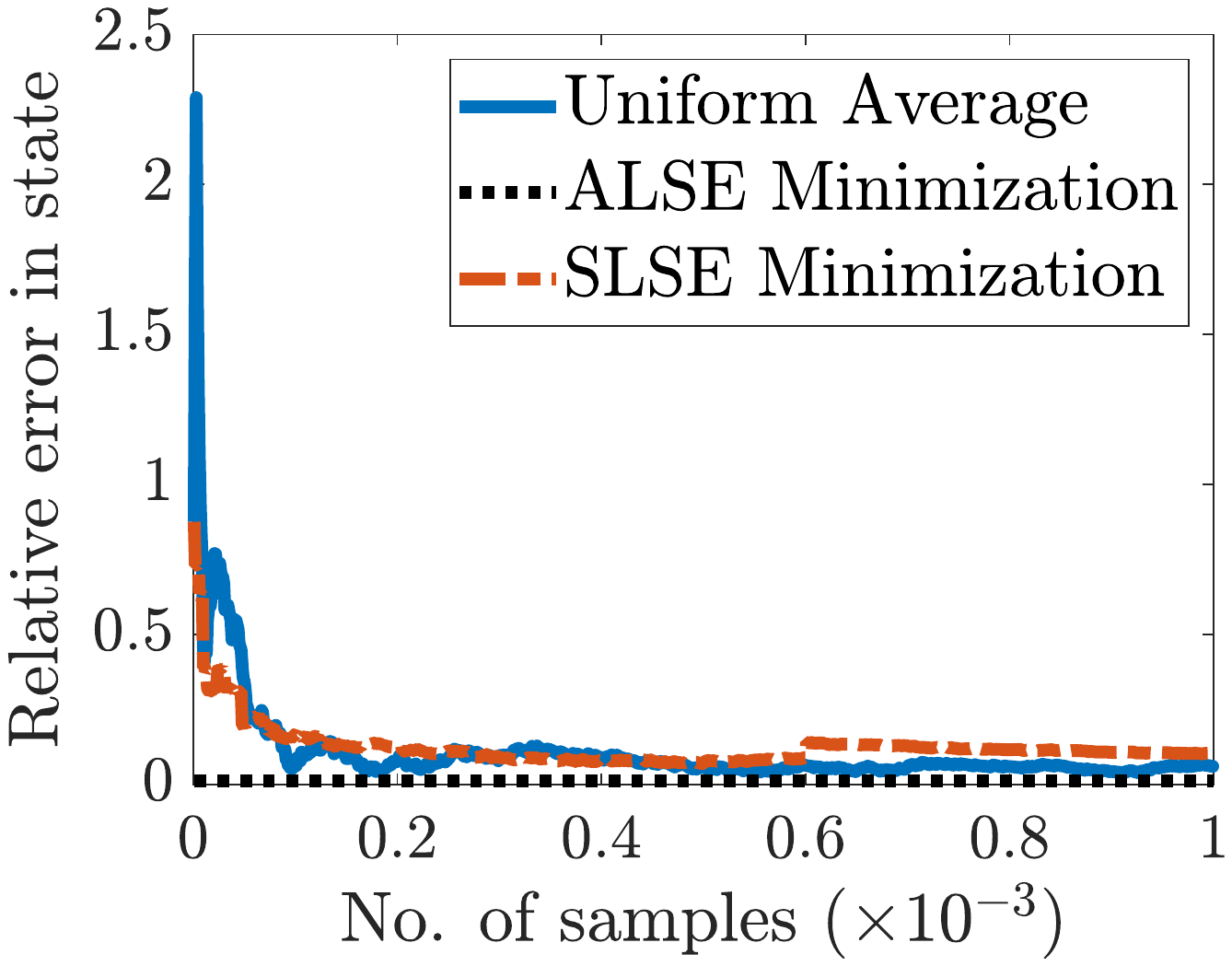}\\
		\hypertarget{subfig:power_control}{(a)} & \hypertarget{subfig:power_state}{(b)}
	\end{tabular}
	\vspace{-.3cm}
		\caption{{\small Relative error versus number of samples for example in Section \ref{sec:power_example}. (a) Relative error in control $\frac{\|\u_\mathrm{computed} - \u_\mathrm{actual}\|}{\|\u_\mathrm{actual}\|}$. (b) Relative error in state.}}
		\label{fig:power_sys}
	\end{center}
\end{figure}
\vspace{-.8cm}
\subsection{Error in Computed State} \label{sec:large_example}
Here, we simulate our algorithms in a trajectory tracking problem for a system with large dimensions. Notice that the sample averaging method works for one time step controls only. Thus, it can be used to track a predefined trajectory of states. For this example, we take the system dimensions as $n = 100$, and $m = 80$ and generate $\A \in \real^{n \times m},\B \in \real^{n \times m},\x(0) \in \real^{n},\x(1) \in \real^{n},$ and $\x(2) \in \real^{n}$ randomly. The matrices $\A$, and $\B$ are used as the system matrices and $\x(0)$ is used as the initial condition. The states $\x(1)$, and $\x(2)$ are used as the reference trajectory.

Similar to the previous case, for each $t \in \{0,1\}$, the control for each realization $\A^{(s)}, \x^{(s)}(t),$ and $\x^{(s)}(t+1)$ is computed as $\u_\mathrm{computed} = \B^\dagger(\x^{(s)}(t+1) - \A^{(s)} \x^{(s)}(t))$. The relative error in the computed states with respect to the number of samples is given in Figure~\ref{fig:traj_track}. 
Notice that (similar to what we observed in Section~\ref{sec:power_example}) the ordering given in Lemma~\ref{lem:compare} that holds \nm{for the estimation errors for the initial condition and the system matrix},  may not hold for the computed states as shown in Figure~\ref{fig:traj_track}. Also, the errors in Figure~\ref{fig:traj_track} are much larger than the errors in Figure~\ref{fig:power_sys}. This is because, the dimensions in this case are much larger than the previous case. Using the idea of convergence of the computed control in Lemma~\ref{lem:single_step_ltv}, we can give similar results for the convergence of computed states, omitted due to the lack of space.
\begin{figure}[h!]
	\begin{center}
	\begin{tabular}{cc}
		\includegraphics[trim = 1.3in 3.1in 1.5in 3.3in, clip, scale=0.28]{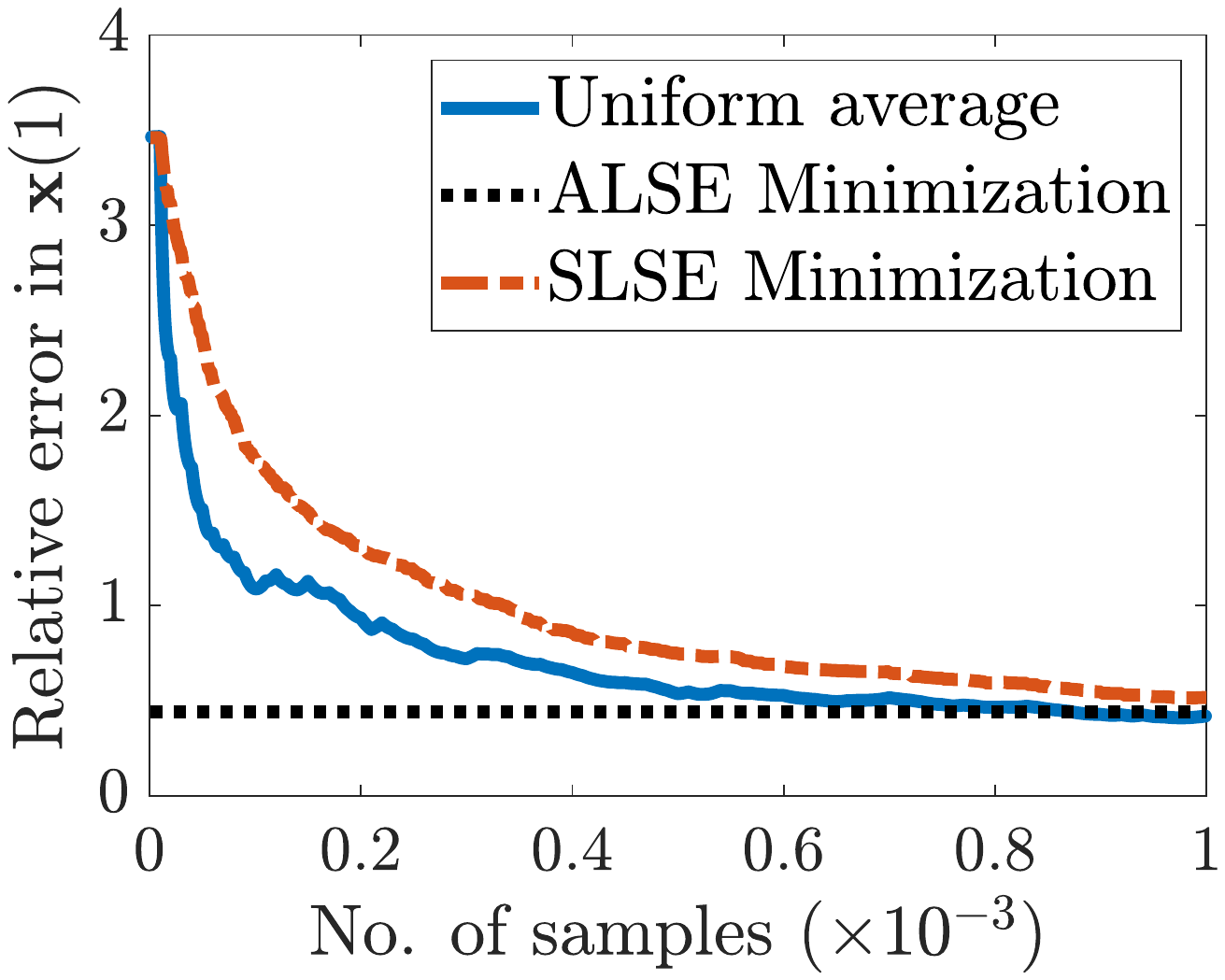} & \includegraphics[trim = 1.3in 3.1in 1.5in 3.3in, clip, scale=0.28]{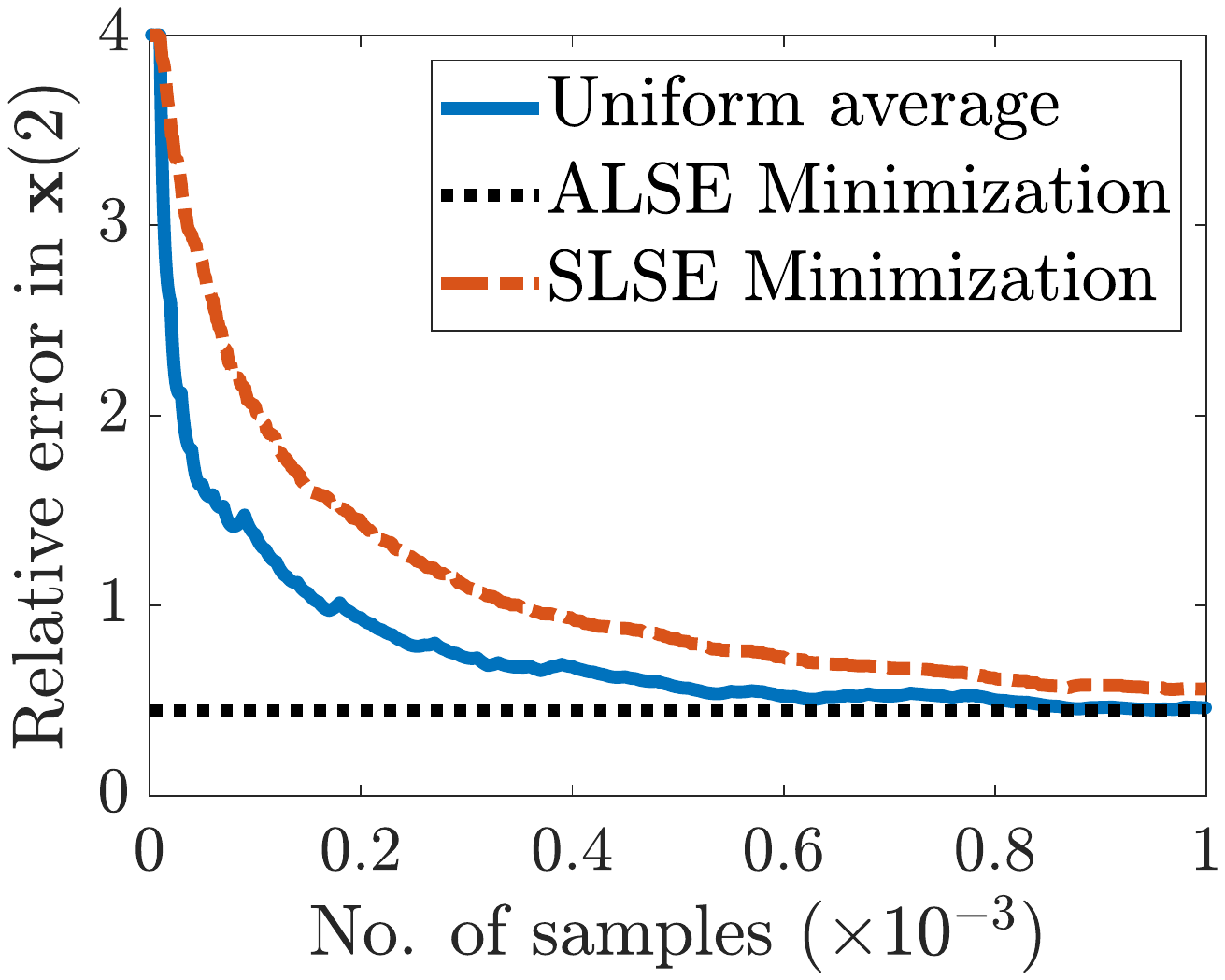}\\
		(a) & (b)
	\end{tabular}
	\vspace{-.4cm}
		\caption{{\small Relative error in state $\frac{\|\x_\mathrm{computed} - \x_\mathrm{actual}\|}{\|\x_\mathrm{actual}\|}$ versus number of samples for example in Section \ref{sec:large_example}. (a) Relative error in $\x(1)$. (b) Relative error in $\x(2)$.}}
		\label{fig:traj_track}
	\end{center}
\end{figure}
\section{Conclusion}
\mo{In this paper, we studied the reachability of a new class of stochastic ensemble systems, where are not required to have a continuum of ensembles.} We provided a necessary condition for reachability under multiple time steps for the limiting DLTI system using approximate reachability for the stochastic ensemble system. We also provided a sufficient condition that averages the control obtained from the sample reachability condition for the ensemble system to find a control for the DLTI system. We provided convergence rates when the stochastic ensemble systems are produced using sampling.
In the future, \mo{we will} extend the sufficient condition to multi-time and continuous-time cases, and will study stochastic ensemble approximations of nonlinear systems. 

\hspace{-.5cm}
\tiny
\bibliographystyle{IEEEtran}
\bibliography{ref.bib}

\end{document}